\newtheorem{theorem}{Theorem}
\newtheorem{lemma}{Lemma}
\newtheorem{proposition}{Proposition}
\newtheorem{corollary}{Corollary}
\theoremstyle{definition}
\newtheorem{definition}{Definition}
\newtheorem{remark}{Remark}
\DeclareMathOperator*{\swlim}{\sigma w-lim}
\newcommand{\defarrow}{\stackrel{\mathrm{def.}}{\Leftrightarrow}}
\newcommand{\cmplx}{\mathbb{C}}
\newcommand{\realn}{\mathbb{R}}
\newcommand{\natn}{\mathbb{N}}
\newcommand{\natz}{\mathbb{N}_0}
\newcommand{\Real}{\mathrm{Re}} 
\newcommand{\Imag}{\mathrm{Im}} 
\newcommand{\calL}{\mathcal{L}}
\newcommand{\cH}{\mathcal{H}}
\newcommand{\Hin}{\mathcal{H}_\mathrm{in}  }
\newcommand{\cK}{\mathcal{K}}
\newcommand{\LH}{\mathcal{L} (\mathcal{H}) }
\newcommand{\LHin}{\mathcal{L} (\mathcal{H}_\mathrm{in}  ) }
\newcommand{\LK}{\mathcal{L}  (\mathcal{K})}
\newcommand{\LHF}{\mathcal{L}  ( \mathcal{H}_F )}
\newcommand{\cstar}{$C^\ast$}
\newcommand{\A}{\mathcal{A}}
\newcommand{\B}{\mathcal{B}}
\newcommand{\C}{\mathcal{C}}
\newcommand{\D}{\mathcal{D}}
\newcommand{\Ain}{\mathcal{A}_{\mathrm{in}}}
\newcommand{\Ainf}{\A^{\otimes \infty}}
\newcommand{\Atn}{\A^{\otimes n}}
\newcommand{\M}{\mathcal{M}}
\newcommand{\N}{\mathcal{N}}
\newcommand{\Min}{\mathcal{M}_{\mathrm{in}}}
\newcommand{\tL}{\widetilde{\Lambda}}
\newcommand{\tG}{\widetilde{\Gamma}}
\newcommand{\tpi}{\widetilde{\pi}}
\newcommand{\trho}{\widetilde{\rho}}
\newcommand{\tphi}{\widetilde{\varphi}}
\newcommand{\tomega}{\widetilde{\omega}}
\newcommand{\tPsi}{\widetilde{\Psi}}
\newcommand{\tA}{\widetilde{\mathcal{A}}}
\newcommand{\bL}{\overline{\Lambda}}
\newcommand{\bpi}{\overline{\pi}}
\newcommand{\bTh}{\overline{\Theta}}
\newcommand{\vph}{\varphi}
\newcommand{\cS}{\mathcal{S}}
\newcommand{\Sp}{\mathcal{S}_{\mathrm{p}}} 
\newcommand{\Ssep}{\mathcal{S}_{\mathrm{sep}}} 
\newcommand{\SA}{\mathcal{S}(\mathcal{A})}
\newcommand{\chset}[2]{\mathbf{Ch} (#1 \to #2)}
\newcommand{\nchset}[2]{\mathbf{Ch}_\sigma (#1 \to #2)}
\newcommand{\chseteb}[2]{\mathbf{Ch}^{\mathrm{EB}} (#1 \to #2)}
{}
\newcommand{\oM}{\mathsf{M}}
\newcommand{\atensor}{\otimes_{\mathrm{alg}}}
\newcommand{\id}{\mathrm{id}}
\newcommand{\unit}{\mathds{1}}
\newcommand{\cocp}{\preccurlyeq_{\mathrm{CP}}}
\newcommand{\eqcp}{\sim_{\mathrm{CP}}}
\newcommand{\CH}{\mathcal{C}\mathcal{H}}
\newcommand{\CHqc}{\mathcal{C}\mathcal{H}^{\mathrm{QC}}}
\newcommand{\CHeb}{\mathcal{C}\mathcal{H}^{\mathrm{EB}}}
\newcommand{\FI}{\mathbb{F}(I)}
\newcommand{\cco}{\overline{\mathrm{co}}}
\newcommand{\bfMpo}{\mathbf{M}_{+,1}}
\newcommand{\Sn}{\mathfrak{S}_n}
\newcommand{\LF}{\Lambda_F}
\newcommand{\gbarg}{\Gamma^{\mathrm{B}}}
\begin{document}
\preprint{}
\title[]{%
Entanglement-breaking channels with general outcome operator algebras%
}
\author{Yui Kuramochi}
\affiliation{Laboratory of Quantum Engineering and Quantum Metrology, School of Physics and Astronomy, Sun Yat-Sen University (Zhuhai Campus), Zhuhai 519082, China}
\email{yui.tasuke.kuramochi@gmail.com}
\date{}

\begin{abstract}
A unit-preserving and completely positive linear map, or a channel,
$\Lambda \colon \mathcal{A} \to \mathcal{A}_{\mathrm{in}}$
between $C^\ast$-algebras $\mathcal{A}$ and $\mathcal{A}_{\mathrm{in}}$ 
is called entanglement-breaking (EB) if 
$\omega \circ( \Lambda \otimes \mathrm{id}_{\mathcal{B}} ) $
is a separable state for any $C^\ast$-algebra $\mathcal{B}$   
and any state $\omega$ on 
the injective $C^\ast$-tensor product $\mathcal{A}_{\mathrm{in}} \otimes \mathcal{B} .$
In this paper, we establish the equivalence of the following conditions for a channel 
$\Lambda$ with a quantum input space and 
with a general outcome $C^\ast$-algebra, generalizing 
known results in finite dimensions:
(i) $\Lambda$ is EB; 
(ii) $\Lambda$ has a measurement-prepare form (Holevo form);
(iii) $n$ copies of $\Lambda$ are compatible for all $2 \leq n < \infty ;$
(iv) countably infinite copies of $\Lambda$ are compatible.
By using this equivalence, 
we also show that the set of randomization-equivalence classes of 
normal EB channels with a fixed input von Neumann algebra 
is upper and lower Dedekind-closed, 
i.e.\ the supremum or infimum of 
any randomization-increasing or decreasing net of EB channels 
is also EB.
As an example, we construct an injective normal EB channel with an arbitrary outcome operator algebra $\mathcal{M}$ acting on an infinite-dimensional separable Hilbert space
by using the coherent states and the Bargmann measure.
\end{abstract}
\pacs{03.67.-a, 02.30.Tb}
\keywords{entanglement-breaking channel, Holevo form, 
channel compatibility, randomization order, no-broadcasting theorem, Dedekind-closedness}
\maketitle

\section{Introduction} \label{sec:intro}
A quantum channel $\Lambda$ is called entanglement-breaking (EB)
if $\Lambda$ tensored with the identity on any state space maps
any entangled state to a separable state.
In Ref.~\onlinecite{Horodecki2003} some useful characterizations
of finite-dimensional EB channels were established.
Later EB condition is considered in the infinite-dimensional quantum 
systems in Refs.~\onlinecite{holevo2005separability,0036-0279-60-2-L12,Holevo2008,Holevo2011,He2013}
and the operator algebraic setup in 
Refs.~\onlinecite{STORMER20082303,Stoermer_2009,doi:10.1063/1.5024385}.
Among them, in Ref.~\onlinecite{holevo2005separability},
it was shown that a channel with input and outcome separable Hilbert spaces
is EB if and only if the channel has a measurement-prepare form (Holevo form).
Recently,  
another characterization of the EB condition in terms of 
channel compatibility was given in Ref.~\onlinecite{1751-8121-50-13-135302}
(Proposition~12):
a finite-dimensional channel is EB if and only if 
$n$ copies of the channel have a joint channel
for all $2 \leq n < \infty .$
The proof in Ref.~\onlinecite{1751-8121-50-13-135302} is based on 
an upper bound~\cite{10.1007/978-3-642-18073-6_2} 
on the diamond norm distance 
that explicitly depends on the dimension of the system
and is not applicable to the infinite-dimensional case.

The purpose of this paper is to generalize these characterizations of the EB condition
to the case of channels with (possibly non-separable) quantum input spaces
and with general outcome operator algebras.
The main finding of this paper is Theorem~\ref{theo:main},
which, roughly speaking, establishes the equivalence of 
the following conditions for a channel $\Lambda$
with quantum input and operator algebraic outcome spaces:
\begin{enumerate}[(i)]
\item
$\Lambda$ is EB;
\item
$\Lambda$ has a measurement-prepare form;
\item
$n$-copies of $\Lambda$ are compatible for all $2 \leq n < \infty ;$
\item
countably infinite copies of $\Lambda$ are compatible.
\end{enumerate}
The equivalence of the first three conditions generalizes known results in finite-dimensions,~\cite{Horodecki2003,1751-8121-50-13-135302}
while the last infinite-self-compatibility condition is a new characterization.

This paper is organized as follows.
After mathematical preliminaries in Section~\ref{sec:prel},
we consider in Section~\ref{sec:sep} general properties of separable states on
injective \cstar-tensor product algebras and show 
that any separable state is represented as a barycentric integral of 
product states (Corollary~\ref{coro:sep}),
which is a \cstar-algebra version of the corresponding integral representation 
given in Ref.~\onlinecite{holevo2005separability}.
In Section~\ref{sec:main} we prove Theorem~\ref{theo:main}, 
the main result of this paper.
In Section~\ref{sec:dedekind}, 
as an application of the characterizations given in Theorem~\ref{theo:main},
we show that
the supremum or infimum of 
any randomization-increasing or decreasing net of normal EB channels with 
a fixed input von Neumann algebra is also EB
(Theorem~\ref{theo:dc}).
In Section~\ref{sec:ex} we construct a normal injective EB channel
that has an arbitrary outcome von Neumann algebra acting on a separable 
Hilbert space.
Section~\ref{sec:conclusion} concludes the paper.

\section{Preliminaries} \label{sec:prel}
This section introduces mathematical preliminaries needed in the main part
and fixes the notation.
For general references of operator algebra we refer to
Refs.~\onlinecite{takesakivol1,sakaibook}.

\subsection{States and channels on operator algebras}
Throughout this paper every \cstar-algebra $\A$ is assumed to have 
a unit element which we write as $\unit_\A .$
For a \cstar-algebra $\A ,$ a positive and normalized linear functional
$\phi \in \A^\ast$ is called a state on $\A .$
We denote by $\cS (\A) $ the set of states on $\A ,$
which is a compact convex subset of the dual space $\A^\ast$
in the weak-$\ast$ topology $\sigma (\A^\ast , \A)$.
We also denote by $\Sp (\A)$ the set of pure states on $\A .$
Since $\Sp (\A)$ coincides with the set of extreme points of $\cS (\A) ,$
the Krein-Milman theorem implies that $\cS (\A) = \cco ( \Sp (\A) ) ,$
where $\cco (\cdot)$ denotes the closed convex hull in the weak-$\ast$ topology.
For a Hilbert space $\cH ,$
$\LH$ and $\unit_\cH$ denote the algebra of bounded linear operators 
and the unit operator on $\cH ,$
respectively.
We also write the inner product of a Hilber space $\cH$ as $\braket{\xi | \eta } $
$(\xi , \eta \in \cH) ,$
which is linear and anti-linear with respect to $\eta$ and $\xi ,$ respectively.

A channel (in the Heisenberg picture)
is a unit-preserving and completely positive (CP) map
$\Lambda \colon \A \to \B$
between \cstar-algebras $\A$ and $\B .$
The algebras $\A$ and $\B$ are called the outcome and input spaces, or algebras, 
of $\Lambda ,$
respectively.
In the Schr\"odinger picture, an input state
$\phi \in \cS (\B)$ is mapped to the outcome state
$\Lambda^\ast (\phi) = \phi \circ \Lambda \in \cS (\A) .$
For \cstar-algebras $\A$ and $\B ,$
the set of channels from $\A $ to $\B$ is denoted by $\chset{\A}{\B} .$

A channel $\Lambda \in \chset{\M}{\N}$ between von Neumann algebras
$\M$ and $\N$ is called normal if $\Lambda$ is continuous with respect to
the $\sigma$-weak topologies of $\M$ and $\N ,$ respectively.
The set of normal channels from $\M$ to $\N$ is denoted by
$\nchset{\M}{\N}  .$
A normal channel with a commutative outcome von Neumann algebra is called
a quantum-classical (QC) channel.

Let $\Lambda \in \chset{\A}{\Ain}$
and $\Gamma \in \chset{\B}{\Ain}$
be channels with the common input space $\Ain .$
We define the randomization (or coarse-graining or concatenation)
relations~\cite{1751-8121-50-13-135302,kuramochi2018incomp}
for channels as follows:
\begin{itemize}
\item
$\Lambda \cocp \Gamma$ 
($\Lambda$ is a randomization of $\Gamma$)
$: \defarrow$
there exists a channel $\alpha \in \chset{\A}{\B}$
such that $\Lambda = \Gamma \circ \alpha ;$
\item
$\Lambda \eqcp \Gamma $
($\Lambda$ is randomization-equivalent to $\Gamma$)
$:\defarrow$
$\Lambda \cocp \Gamma$ and $\Gamma \cocp \Lambda .$ 
\end{itemize}
The relations $\cocp $ and $\eqcp$ are binary preorder and equivalence relations,
respectively,
defined on the class of channels with a fixed input algebra.
In the above definition, if $\A , $ $\B ,$ and $\Ain$ are von Neumann algebras
and $\Lambda $ and $\Gamma$ are normal, 
$\Lambda \cocp \Gamma$
if and only if
there exists a normal channel
$\alpha \in \nchset{\A}{\B}$ such that
$\Lambda = \Gamma \circ \alpha .$~\cite{kuramochi2018incomp} 

Let $\A$ be a \cstar-algebra
and let $\pi_\A \colon \A \to \calL (\cH_\A)$ be the universal representation of $\A ,$
which is the direct sum of the Gelfand-Naimark-Segal
(GNS) representations of 
all the states on $\A .$
The von Neumann algebra $\pi_\A (\A)^{\prime \prime} ,$
where the prime denotes the commutant, 
generated by $\pi_\A (\A)$ is called the universal enveloping von Neumann algebra 
of $\A .$
The algebra $\pi_\A (\A)^{\prime \prime}$ is isometrically isomorphic to 
the double dual space $\A^{\ast \ast}$
and we identify $\pi_\A (\A)^{\prime \prime}$ with $\A^{\ast \ast } .$
By identifying $\A$ with $\pi_\A (\A) ,$
we also regard $\A$ as a $\sigma$-weakly dense 
\cstar-subalgebra of $\A^{\ast \ast} .$

Let $\A$ be a \cstar-algebra and let $\Min$ be a von Neumann algebra.
Then any channel $\Lambda \in \chset{\A}{\Min}$
uniquely extends to a normal channel
$\bL \in \nchset{\A^{\ast \ast }}{\Min} $
which we call the normal extension of $\Lambda .$\cite{kuramochi2018incomp}
The normal extension $\bL$ is the least normal channel that upper bounds $\Lambda$
in the randomization order $\cocp .$
Specifically, $\Lambda \cocp \bL$ and for any normal channel 
$\Gamma \in \nchset{\N}{\Min} ,$
$\Lambda \cocp \Gamma$ if and only if
$\bL \cocp \Gamma $
(Ref.~\onlinecite{kuramochi2018incomp}, Lemma 7).
This also implies that $\Lambda \eqcp \bL$ if $\Lambda$ is normal.

\subsection{Positive-operator valued measures}
A positive-operator valued measure (POVM)~\cite{davieslewisBF01647093,holevo2011probabilistic}
on a Hilbert space $\cH$ is a triple $(\Omega ,\Sigma , \oM)$
such that $(\Omega , \Sigma)$ is a measurable space, i.e.\
$\Sigma$ is a $\sigma$-algebra on a set $\Omega ,$
and $\oM \colon \Sigma \to \LH$ is a map satisfying
\begin{enumerate}[(i)]
\item
$\oM (E) \geq 0$
$(E \in \Sigma) ;$
\item
$\oM (\Omega) = \unit_{\cH} ;$
\item
for any disjoint sequence $\{ E_n \}_{n \in \natn} \subset \Sigma ,$
$\oM (\bigcup_{n \in \natn} E_n) = \sum_{n \in \natn} \oM(E_n) .$
\end{enumerate}
A POVM describes the classical outcome statistics of a general quantum measurement.
For a measurable space $(\Omega , \Sigma) ,$
we denote by $B(\Omega , \Sigma)$ the set of bounded, complex valued, 
$\Sigma$-measurable functions on $\Omega ,$
which is a commutative \cstar-algebra under the supremum norm
$\| f \| := \sup_{x \in \Omega} | f(x) | .$
If $(\Omega , \Sigma , \oM)$ is a POVM on $\cH ,$
for each $f \in B(\Omega , \Sigma) $ we can define the integral
$\int_{\Omega} f (x) d \oM (x) \in \LH$
in the weak sense.

Let $\Omega$ be a compact Hausdorff space.
We denote by $C (\Omega)$ and $\B (\Omega)$
the commutative \cstar-algebra of bounded complex-valued continuous functions on
$\Omega $
and the Borel $\sigma$-algebra generated by the set of open subsets of $\Omega ,$
respectively.
A POVM $(\Omega , \B (\Omega) , \oM)$ on a Hilbert space $\cH$
is called regular
if 
\[
	\B (\Omega) \ni E 
	\longmapsto 
	\braket{\xi | \oM(E) \eta}
	\in \cmplx
\]
is a regular complex measure for any $\xi , \eta \in \cH .$
According to the Riesz-Markov-Kakutani representation theorem
we have a one-to-one correspondence 
\[
	\cS (C(\Omega)) \ni \phi 
	\longmapsto 
	\mu_\phi \in 
	\bfMpo (\Omega)
\]
between the state space $\cS (C(\Omega))$
and the set $\bfMpo (\Omega)$ of regular probability measures on $\Omega$
such that
\[
	\phi (f)
	=
	\int_\Omega 
	f(x) d\mu_\phi (x) 
	\quad
	(f \in C(\Omega)) .
\]
A similar representation theorem also holds for 
channels as follows.
\begin{proposition}[Ref.~\onlinecite{busch2016quantum}, Theorem~4.4]
\label{prop:RMK}
Let $\Omega$ be a compact Hausdorff space and let $\cH$ be a Hilbert space.
Then for any channel $\Gamma \in \chset{C(\Omega)}{\LH}$
there exists a unique regular POVM $(\Omega , \B (\Omega) , \oM)$ on $\cH$
such that
\[
	\Gamma (f)
	=
	\int_\Omega f (x) d\oM (x)
	\quad
	(f \in C (\Omega)) .
\]
\end{proposition}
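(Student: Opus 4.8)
The plan is to reduce the operator-valued representation to the scalar Riesz--Markov--Kakutani theorem recalled above by decoupling the two Hilbert space vectors. First I would fix $\xi , \eta \in \cH$ and consider the bounded linear functional $C(\Omega) \ni f \mapsto \braket{\xi | \Gamma (f) \eta} \in \cmplx$. Since $\Gamma$ is unital and positive it is a contraction, so this functional has norm at most $\norm{\xi} \norm{\eta}$. The Riesz--Markov--Kakutani theorem then supplies a unique regular complex Borel measure $\mu_{\xi , \eta}$ on $\Omega$ with $\braket{\xi | \Gamma(f) \eta} = \int_\Omega f \, d\mu_{\xi,\eta}$ for all $f \in C(\Omega)$, and its total variation equals the functional norm, so that $| \mu_{\xi,\eta}(E) | \le \norm{\xi}\norm{\eta}$ for every $E \in \B(\Omega)$.

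Next I would assemble these measures into a POVM. Uniqueness in the scalar representation, combined with the anti-linearity in $\xi$ and linearity in $\eta$ of $\braket{\xi|\Gamma(f)\eta}$, forces $\mu_{\xi,\eta}$ to depend anti-linearly on $\xi$ and linearly on $\eta$. Hence for each $E \in \B(\Omega)$ the map $(\xi,\eta) \mapsto \mu_{\xi,\eta}(E)$ is a bounded sesquilinear form on $\cH$, and the Riesz representation for such forms yields a unique $\oM(E) \in \LH$ with $\braket{\xi|\oM(E)\eta} = \mu_{\xi,\eta}(E)$ and $\norm{\oM(E)} \le 1$. Positivity of $\oM(E)$ follows because $f \ge 0$ gives $\braket{\xi|\Gamma(f)\xi} \ge 0$, so each $\mu_{\xi,\xi}$ is a positive measure and $\braket{\xi|\oM(E)\xi} = \mu_{\xi,\xi}(E) \ge 0$; the normalization $\oM(\Omega) = \unit_\cH$ comes from $\mu_{\xi,\eta}(\Omega) = \braket{\xi|\Gamma(\unit_{C(\Omega)})\eta} = \braket{\xi|\eta}$; countable additivity in the weak sense, i.e.\ $\braket{\xi | \oM(\bigcup_{n \in \natn} E_n ) \eta} = \sum_{n \in \natn} \braket{\xi | \oM(E_n) \eta}$ for disjoint $\{ E_n \}$, is inherited from the countable additivity of each $\mu_{\xi,\eta}$; and regularity is immediate since $E \mapsto \braket{\xi|\oM(E)\eta} = \mu_{\xi,\eta}(E)$ is by construction a regular complex measure.

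Finally, the integral formula is a restatement of the construction: by the defining weak integral, $\braket{\xi| ( \int_\Omega f\,d\oM ) \eta} = \int_\Omega f\,d\mu_{\xi,\eta} = \braket{\xi|\Gamma(f)\eta}$ for all $\xi,\eta \in \cH$ and all $f \in C(\Omega)$, whence $\Gamma(f) = \int_\Omega f\,d\oM$. Uniqueness reduces once more to the scalar case: if $\oM'$ is another regular POVM representing $\Gamma$, then for all $\xi,\eta$ the regular measures $\mu_{\xi,\eta}$ and $E \mapsto \braket{\xi|\oM'(E)\eta}$ integrate every $f \in C(\Omega)$ to the same value, so they coincide by the uniqueness clause of Riesz--Markov--Kakutani, giving $\oM = \oM'$. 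I expect the only genuine subtlety to be the bookkeeping that turns the family $\{ \mu_{\xi,\eta} \}$ into a single operator-valued set function --- namely deducing sesquilinearity from uniqueness and controlling the form through the total-variation bound so that the Riesz lemma applies --- while the remaining properties are a direct transcription of the scalar theorem.
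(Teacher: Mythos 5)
Your proof is correct, but note that the paper itself does not prove this proposition: it is imported by citation from Busch, Lahti, Pellonp\"a\"a, and Ylinen (\emph{Quantum Measurement}, Theorem~4.4), so there is no internal argument to compare against. Your route is the standard one behind that cited result: reduce to the scalar Riesz--Markov--Kakutani theorem for each pair $(\xi,\eta)$, use uniqueness of the scalar representation to get sesquilinearity of $(\xi,\eta)\mapsto\mu_{\xi,\eta}(E)$, apply the Riesz lemma for bounded sesquilinear forms to produce $\oM(E)$, and read off positivity, normalization, weak $\sigma$-additivity, regularity, and uniqueness directly from the corresponding scalar properties. All the steps check out --- in particular, only positivity (not complete positivity) of $\Gamma$ is needed, which is consistent with the commutative domain, and your uniqueness argument correctly invokes the regularity hypothesis on the competing POVM. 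The one cosmetic remark: the paper's POVM axiom (iii) is stated as operator-valued $\sigma$-additivity, so if you want to be fully pedantic you would add that weak $\sigma$-additivity of a positive-operator-valued set function upgrades to strong convergence of the partial sums (an increasing bounded net of positive operators converges strongly), but this is routine and does not affect correctness.
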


\subsection{Tensor products}
Let $\A$ and $\B$ be \cstar-algebras.
We denote the algebraic tensor and the injective \cstar-tensor 
products by $\A \atensor \B$ and $\A \otimes \B ,$
respectively.
If $\A$ and $\B$ are faithfully represented on Hilbert spaces 
$\cH$ and $\cK ,$ respectively,
then $\A \otimes \B$ coincides with the norm closure
of $\A \atensor \B$ on $\calL (\cH \otimes \cK) $
up to isomorphism.
For \cstar-algebras $\A , $ $\B ,$ and $\C ,$
the associative law 
\[
(\A  \otimes \B) \otimes \C = \A \otimes(\B \otimes \C)
=:
\A \otimes \B \otimes \C
\]
holds
(again up to isomorphism)
and the \cstar-algebra $\A \otimes \B \otimes \C$
coincides with the norm closure of $\A \atensor \B \atensor \C$
on $\calL (\cH \otimes \cK \otimes \mathcal{J})$
if $\A ,$ $\B ,$ and $\C$ are faithfully represented on Hilbert spaces 
$\cH , $ $\cK ,$ and $\mathcal{J} ,$ respectively.  
The injective \cstar-tensor product of arbitrary finite number of 
\cstar-algebras can be defined similarly.

Let $\Lambda \in \chset{\A}{\B}$ and $\Gamma \in \chset{\C}{\D}$
be channels.
Then there exists a unique CP channel
$\Lambda \otimes \Gamma \in \chset{\A \otimes \C}{\B \otimes \D},$
called the tensor product channel,
such that
$\Lambda \otimes \Gamma ( A \otimes C) = \Lambda (A) \otimes \Gamma (C)$
$(A \in \A , C \in \C) .$

By the inductive limit procedure,
we can also define the injective \cstar-tensor product of 
infinite number of \cstar-algebras.~\cite{takeda1955}
In this paper we only need the countably infinite tensor product
$\A^{\otimes \infty}$ 
of identical \cstar-algebras, 
which is defined as follows.
For each \cstar-algebra $\A$
and each integer $1 \leq n < \infty ,$ we write
\[
	\A^{\otimes n}
	:=
	\underbrace{\A \otimes \dots \otimes \A}_{\text{$n$ elements}} .
\]
For each $1 \leq m <n < \infty ,$
the representation
$
\pi_{n \gets m} \colon
\A^{\otimes m} \ni X 
\mapsto
X \otimes \unit_{\A^{\otimes (n-m)}}
\in \A^{\otimes n}
$
is faithful and the family $(\pi_{n \gets m})_{1 \leq m < n < \infty}$
satisfies the consistency condition
$\pi_{n \gets k} = \pi_{n \gets m} \circ \pi_{m \gets k}$
$(1 \leq k <m < n < \infty) .$
Then there exist a \cstar-algebra
$\A^{\otimes \infty}$
and faithful representations $\pi_n \colon \A^{\otimes n} \to \A^{\otimes \infty}$
$(1 \leq n < \infty) $
such that
$\pi_m = \pi_n \circ \pi_{n\gets m}$
$(1 \leq m<n<\infty)$
and 
$\bigcup_{1 \leq n < \infty} \pi_n (\A^{\otimes n})$
is a norm dense $\ast$-subalgebra of $\A^{\otimes \infty} .$
Such $\Ainf$ and $(\pi_n)_{n \geq 1}$ are unique up to isomorphism.
The \cstar-algebra $\A^{\otimes \infty} $
is called the (countably) infinite injective \cstar-tensor product of $\A .$
The element $\pi_n (X) \in \A^{\otimes \infty}$ $(X \in \A^{\otimes n})$
and the \cstar-subalgebra $\pi_n (\A^{\otimes n}) \subset \A^{\otimes \infty}$
are written as $X \otimes \unit_{\A^{\otimes \infty}}$ and 
$\A^{\otimes n} \otimes \unit_{\A^{\otimes \infty}} ,$
respectively.

Let $\A$ and $\B$ be \cstar-algebras
and let $ 2 \leq n < \infty .$
A channel $\Lambda \in \chset{\A^{\otimes n}}{\B}$
is called symmetric if 
$\Lambda = \Lambda \circ \pi_\sigma$
for all $\sigma \in \Sn ,$
where $\Sn$ is the symmetric group of 
the finite set $\{ 1, \dots , n\} $
and $\pi_\sigma \colon \A^{\otimes n} \to \A^{\otimes n}$
$(\sigma \in \Sn)$
is a representation defined by
\[
	\pi_\sigma (A_1 \otimes \dots \otimes A_n)
	:=
	A_{\sigma(1)} \otimes \dots \otimes A_{\sigma(n)}
	\quad
	(A_1 , \dots , A_n \in \A) .
\]
The representation $\pi_\sigma$ 
$(\sigma \in \Sn )$
extends to the representation 
$\pi_\sigma \otimes \id_{\Ainf} \colon \Ainf \to \Ainf$ defined by
\[
	\pi_\sigma \otimes \id_{\Ainf}
	(X \otimes Y \otimes \unit_{\Ainf})
	:=
	\pi_\sigma (X) \otimes Y \otimes \unit_{\Ainf}
	\quad
	(1\leq m < \infty , X \in \A^{\otimes n } , Y \in \A^{\otimes m}).
\]
A channel $\tL \in \chset{\Ainf}{\B}$ is called symmetric 
if $\tL \circ (\pi_\sigma \otimes \id_{\Ainf}) = \tL$
for all $2 \leq n < \infty$
and all $\sigma \in \Sn .$
A channel $\tL \in \chset{\Ainf}{\B}$ is symmetric if and only if
the restriction
\[
	\tL_n \colon \Atn \ni X \longmapsto \tL (X \otimes \unit_{\Ainf}) \in \B
\]
is symmetric for all $2 \leq n <\infty .$

\subsection{Self-compatibility of channels}
Let $\Lambda \in \chset{\A}{\Ain}$ be a channel.
For $2 \leq n < \infty ,$
a channel $\Theta \in \chset{\Atn}{\Ain}$
is called an $n$-joint channel~\cite{1751-8121-50-13-135302}
of $\Lambda$ if 
\[
	\Lambda(A)
	=
	\Theta (A \otimes \unit_{\A^{\otimes (n-1)}})
	=
	\dots
	=
	\Theta (\unit_{\A^{\otimes k}} \otimes A \otimes \unit_{\A^{\otimes (n-k-1)}} )
	=\dots =
	\Theta ( \unit_{\A^{\otimes (n-1)}} \otimes A)
\]
for all $A \in \A ,$
i.e.\ the $n$ marginal channels of $\Theta$ are identical to $\Lambda .$
An $n$-joint channel $\Theta$ of $\Lambda ,$ if exists,
can always be taken to be symmetric by retaking $\Theta$
as~\cite{1751-8121-50-13-135302}
\[
	\frac{1}{n!}
	\sum_{\sigma \in \Sn}
	\Theta \circ \pi_\sigma .
\]
A channel $\Theta_\infty \in \chset{\Ainf}{\Ain}$ is called an 
$\omega$-joint channel of $\Lambda$ if
\[
	\Lambda (A)
	=
	\Theta_\infty 
	(A \otimes \unit_{\Ainf})
	=
	\Theta_\infty 
	( \unit_{\Atn} \otimes   A \otimes \unit_{\Ainf})
\]
for all $A \in \A$ and all $1 \leq n < \infty .$

\subsection{No-broadcasting theorem}
Let $\Lambda \in \chset{\A}{\Ain}$ be a channel 
and let $\A \otimes_x \A$ be either 
$\A \atensor \A $ or $\A \otimes \A .$
Then a channel $\Psi \in \chset{\A \otimes_x \A}{\A}$
is called a broadcasting channel of $\Lambda$ if
\[
	\Lambda (A) 
	=
	\Lambda \circ \Psi (A \otimes \unit_\A)
	=
	\Lambda \circ \Psi (  \unit_\A \otimes  A)
\]
for all $A \in \A ,$
where a linear map 
$\Phi \colon \A \atensor \A \to \B$
for a \cstar-algebra $\B$ is called a channel if $\Phi$ 
is unit-preserving and CP, i.e.\
\[
	\sum_{i,j=1}^n B_i^\ast \Phi (X_i^\ast X_j) B_j
	\geq 0
\]
for all $1 \leq n < \infty ,$
$\{ B_i \}_{i=1}^n \subset \B ,$
and
$\{  X_i \}_{i=1}^n \subset \A \atensor \A .$
If $\Lambda$ has a broadcasting channel 
$\Psi \in \chset{\A \otimes \A}{\A}$
(respectively, $\Psi \in \chset{\A \atensor \A}{\A}$),
then $\Lambda$ is called broadcastable in the sense of injective \cstar-tensor product
(respectively, algebraic tensor product).

The following theorem, which is immediate from Corollary~1 of 
Ref.~\onlinecite{kuramochi2018access},
is an operator algebraic version of the quantum no-broadcasting theorem
and will play an important role in the proof of the main result.
\begin{theorem}[No-broadcasting theorem for normal channels]
\label{theo:nb}
Let $\Lambda \in \nchset{\M}{\Min}$ be a normal channel.
Then the following conditions are equivalent.
\begin{enumerate}[(i)]
\item
$\Lambda$ is broadcastable in the sense of algebraic tensor product.
\item
$\Lambda$ is broadcastable in the sense of injective \cstar-tensor product.
\item
$\Lambda$ is randomization-equivalent to a QC channel with the input space $\Min .$
\end{enumerate}
\end{theorem}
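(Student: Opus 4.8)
The plan is to prove the theorem as the cycle of implications (i)$\Rightarrow$(iii)$\Rightarrow$(ii)$\Rightarrow$(i), so that all three conditions become equivalent. Two of the three steps are routine, and the weight of the statement rests on the first.

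The implication (ii)$\Rightarrow$(i) is immediate. Since $\M \atensor \M$ is a $\ast$-subalgebra of $\M \otimes \M$ and the elements $A \otimes \unit_\M$ and $\unit_\M \otimes A$ already lie in $\M \atensor \M$, the restriction to $\M \atensor \M$ of a broadcasting channel $\Psi \in \chset{\M \otimes \M}{\M}$ is unital, satisfies the complete-positivity condition of the algebraic-tensor setting, and obeys the same broadcasting identities; hence it broadcasts $\Lambda$ in the algebraic sense.

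For (iii)$\Rightarrow$(ii) I would argue constructively. Assume $\Lambda \eqcp \Gamma$ for a QC channel $\Gamma \in \nchset{\C}{\Min}$ with $\C$ commutative. By the normal-randomization characterization recalled in Section~\ref{sec:prel}, there are normal channels $\alpha \in \nchset{\M}{\C}$ and $\beta \in \nchset{\C}{\M}$ with $\Lambda = \Gamma \circ \alpha$ and $\Gamma = \Lambda \circ \beta$. Because $\C$ is commutative, its multiplication extends to a unital $\ast$-homomorphism $m \in \chset{\C \otimes \C}{\C}$ (restriction to the diagonal of the Gelfand spectrum) satisfying $m(c \otimes \unit_\C) = c = m(\unit_\C \otimes c)$ for all $c \in \C$. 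I then set
\[
  \Psi := \beta \circ m \circ (\alpha \otimes \alpha) \in \chset{\M \otimes \M}{\M},
\]
with $\alpha \otimes \alpha$ the tensor product channel. Using $\alpha(\unit_\M) = \unit_\C$, for every $A \in \M$ one computes
\[
  \Lambda \circ \Psi (A \otimes \unit_\M)
  = \Lambda \circ \beta \circ m (\alpha(A) \otimes \unit_\C)
  = \Lambda \circ \beta \circ \alpha (A)
  = \Gamma \circ \alpha (A)
  = \Lambda (A),
\]
and symmetrically $\Lambda \circ \Psi (\unit_\M \otimes A) = \Lambda(A)$, so $\Psi$ broadcasts $\Lambda$ in the injective sense. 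Note that only the channel property of $\Psi$ is required, so normality of $m$ is irrelevant.

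The genuinely hard implication is (i)$\Rightarrow$(iii), and this is where the no-broadcasting content enters. Here I would invoke Corollary~1 of Ref.~\onlinecite{kuramochi2018access} once the terminology is aligned, so that the main obstacle is really one of translation: verifying that the weakest, algebraic-tensor broadcasting hypothesis in (i) is exactly the hypothesis of that corollary, and that its conclusion --- a commutative factorization of $\Lambda$ up to randomization --- coincides with randomization equivalence to a QC channel with input $\Min$. Were one to seek a self-contained argument instead, the crux would be to pass to the normal extension and show, via a Stinespring/GNS dilation argument, that the existence of even an algebraic broadcasting channel forces a commutative ``classical shadow'' of $\Lambda$ through which $\Lambda$ factors up to $\eqcp$; establishing commutativity from broadcastability is the technical heart, and it is precisely this step that the cited corollary supplies.
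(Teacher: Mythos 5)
Your proposal is correct and is essentially the paper's own treatment: the paper supplies no argument for this theorem at all, declaring it ``immediate from Corollary~1 of Ref.~\onlinecite{kuramochi2018access},'' which is precisely the citation you invoke for the only substantive implication (i)$\Rightarrow$(iii). Your explicit verifications of the routine steps --- (ii)$\Rightarrow$(i) by restricting a broadcasting channel on $\M \otimes \M$ to $\M \atensor \M$, and (iii)$\Rightarrow$(ii) via the channel $\beta \circ m \circ (\alpha \otimes \alpha)$ built from the diagonal multiplication $m$ of the commutative outcome algebra --- are both sound and simply make explicit what the paper leaves implicit.
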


For the broadcastability for channels with outcome \cstar-algebras, 
we have the following lemma.

\begin{lemma}\label{lemm:bc}
Let $\A$ be a \cstar-algebra, 
let $\Min$ be a von Neumann algebra, 
and let $\Lambda \in \chset{\A }{\Min}$
be a channel.
Suppose that $\Lambda$ has a broadcasting channel
$\Psi \in \chset{\A \otimes \A}{\A} .$
Then the normal extension $\bL \in \nchset{\A^{\ast \ast}}{\Min}$
of $\Lambda$ satisfies the conditions (i)-(iii) of Theorem~\ref{theo:nb}.
\end{lemma}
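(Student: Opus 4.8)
The plan is to verify condition (i) of Theorem~\ref{theo:nb} for $\bL$; since the three conditions there are equivalent, this automatically yields (ii) and (iii). Thus it suffices to exhibit a broadcasting channel $\tPsi \in \chset{\A^{\ast \ast} \atensor \A^{\ast \ast}}{\A^{\ast \ast}}$ for $\bL$, i.e.\ a unit-preserving CP map on the algebraic tensor product with $\bL (X) = \bL (\tPsi (X \otimes \unit)) = \bL (\tPsi (\unit \otimes X))$ for all $X \in \A^{\ast \ast}$. Throughout I identify $\A$ with its canonical image in $\A^{\ast \ast}$, so that $\bL$ restricts to $\Lambda$ on $\A$.

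First I would promote the given $\Psi$ to the bidual level. Regarding $\Psi$ as a channel $\A \otimes \A \to \A \subseteq \A^{\ast \ast}$ into the von Neumann algebra $\A^{\ast \ast}$, let $\overline{\Psi} \in \nchset{(\A \otimes \A)^{\ast \ast}}{\A^{\ast \ast}}$ be its normal extension. Next, the two corner embeddings $\A \ni A \mapsto A \otimes \unit_\A$ and $\A \ni A \mapsto \unit_\A \otimes A$ into the von Neumann algebra $(\A \otimes \A)^{\ast \ast}$ are unital $\ast$-homomorphisms, hence extend to normal unital $\ast$-homomorphisms $\rho , \sigma \colon \A^{\ast \ast} \to (\A \otimes \A)^{\ast \ast}$ (the normal extension of a $\ast$-homomorphism is again multiplicative). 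Since $\rho$ and $\sigma$ agree on the $\sigma$-weakly dense subalgebra $\A$ with the commuting embeddings $A \otimes \unit_\A$ and $\unit_\A \otimes A$, separate $\sigma$-weak continuity of multiplication forces $\rho (\A^{\ast \ast})$ and $\sigma (\A^{\ast \ast})$ to commute. Commuting ranges then make $\mu (X \otimes Y) := \rho (X) \sigma (Y)$ a well-defined unital $\ast$-homomorphism $\mu \colon \A^{\ast \ast} \atensor \A^{\ast \ast} \to (\A \otimes \A)^{\ast \ast}$, and I set $\tPsi := \overline{\Psi} \circ \mu$.

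Then $\tPsi$ is a channel: $\mu$ is a unital $\ast$-homomorphism, in particular CP, and $\overline{\Psi}$ is a normal channel, so their composite is unit-preserving and CP in the sense required of an algebraic-tensor-product channel. It remains to verify the broadcasting identity. For $A \in \A$ one computes $\tPsi (A \otimes \unit) = \overline{\Psi}(\rho (A)) = \Psi (A \otimes \unit_\A)$, whence $\bL (\tPsi (A \otimes \unit)) = \Lambda (\Psi (A \otimes \unit_\A)) = \Lambda (A) = \bL (A)$, using $\bL|_\A = \Lambda$ and the broadcasting property of $\Psi$; the symmetric computation with $\sigma$ handles $\unit \otimes A$. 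Since $X \mapsto \bL (\tPsi (X \otimes \unit))$ and $X \mapsto \bL (X)$ are both normal (the former being the composite $\bL \circ \overline{\Psi} \circ \rho$ of normal maps) and agree on the $\sigma$-weakly dense $\A$, they coincide on all of $\A^{\ast \ast}$, and likewise for the second leg.

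The one genuinely delicate point is the separate normality needed to transport $\Psi$ from $\A \otimes \A$ up to $\A^{\ast \ast} \atensor \A^{\ast \ast}$; I expect this to be the crux. The device that avoids any multilinear extension theorem is the combined use of the algebraic tensor product and the $\ast$-homomorphisms $\rho , \sigma$: commuting ranges of $\ast$-homomorphisms always induce a $\ast$-homomorphism on the algebraic tensor product, which is precisely why condition (i)—rather than the injective \cstar-tensor version—is the natural target and why no control of tensor norms is required. The remaining verifications are the routine $\sigma$-weak density arguments indicated above.
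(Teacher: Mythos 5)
Your proof is correct, and it shares the same overall skeleton as the paper's: reduce to condition (i) of Theorem~\ref{theo:nb}, construct a broadcasting channel for $\bL$ on the algebraic tensor product $\A^{\ast\ast} \atensor \A^{\ast\ast}$ by normally extending the two one-sided pieces of $\Psi$, use commutation of the extended ranges to get a well-defined map on $\A^{\ast\ast} \atensor \A^{\ast\ast}$, and conclude with normality plus $\sigma$-weak density of $\A$ in $\A^{\ast\ast}$. The technical realization, however, is genuinely different. The paper dilates: it takes a Stinespring representation $(\cK , \pi , V)$ of $\Psi$, extends the commuting representations $\pi_L = \pi (\,\cdot\, \otimes \unit_\A)$ and $\pi_R = \pi (\unit_\A \otimes \,\cdot\,)$ to normal representations $\bpi_L , \bpi_R$ of $\A^{\ast\ast}$ on $\cK$ (their ranges commute because $\bpi_L (\A^{\ast\ast}) = \pi_L (\A)^{\prime\prime}$ and $\bpi_R (\A^{\ast\ast}) = \pi_R (\A)^{\prime\prime}$), defines $\tpi$ on the algebraic tensor product, and compresses back, $\tPsi = V^\ast \tpi (\,\cdot\,) V$. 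You avoid any dilation by instead taking the normal extension $\overline{\Psi}$ of $\Psi$ itself (viewed as a channel into the von Neumann algebra $\A^{\ast\ast}$) together with the normal extensions $\rho , \sigma$ of the two corner embeddings into $(\A \otimes \A)^{\ast\ast}$, and setting $\tPsi = \overline{\Psi} \circ \mu$ with $\mu (X \otimes Y) = \rho(X) \sigma (Y)$. Both routes rest on the same universality of the enveloping von Neumann algebra; yours is the more functorial one (no choice of dilation, and your commutation argument via separate $\sigma$-weak continuity of multiplication plus density is self-contained), whereas the paper stays within the textbook form of universality (extension of representations on a Hilbert space) at the cost of invoking Stinespring and the double-commutant description of the extended ranges. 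The one assertion you lean on without proof --- that the normal extension of a unital $\ast$-homomorphism into a von Neumann algebra is again multiplicative --- is standard, and can be verified by exactly the separate-continuity-plus-density argument you already use for the commutation step, so it leaves no real gap.
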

\begin{proof}
It is sufficient to show that $\bL$ satisfies the condition (i) of 
Theorem~\ref{theo:nb}.
Let $(\cH_\A , \pi_\A)$ be the universal representation of $\A .$
From the complete positivity of 
$\Psi \colon \A \otimes \A \to \A \subset \calL (\cH_\A) ,$
we can take a Stinespring representation $(\cK , \pi ,V)$ of $\Psi ,$
namely
$\cK$ is a Hilbert space, $\pi \colon \A \otimes \A \to \LK$
is a representation,
and $V \colon \cH_\A \to \cK$ is a linear isometry such that
\[
	\Psi (X)
	=
	V^\ast \pi (X) V 
	\quad
	(X \in \A \otimes \A) .
\]
We define representations $\pi_L$ and $\pi_R$ by
\[
	\pi_L (A)
	:=
	\pi (A \otimes \unit_\A) ,
	\quad
	\pi_R (A)
	:=
	\pi (\unit_\A \otimes A )
	\quad
	(A \in \A) .
\]
By the universality of the enveloping algebra $\A^{\ast \ast}$
the representations $\pi_L$ and $\pi_R$ uniquely extend to 
normal representations
$\bpi_L \colon \A^{\ast \ast} \to \LK$
and
$\bpi_R \colon \A^{\ast \ast} \to \LK ,$
respectively.
Since $\pi_L (\A)$ and $\pi_R (\A)$ commute,
so do the von Neumann algebras 
$\bpi_L (\A^{\ast \ast}) = \pi_L (\A)^{\prime \prime}$
and 
$\bpi_R (\A^{\ast \ast}) = \pi_R (\A)^{\prime \prime} .$
Thus we can define a representation
$\tpi \colon \A^{\ast \ast} \atensor \A^{\ast \ast} \to \LK$
and a channel $\tPsi \colon \A^{\ast \ast} \atensor \A^{\ast \ast} \to \A^{\ast \ast}$ by
\begin{gather*}
	\tpi \left(
	\sum_i A_i^{\prime \prime} \otimes B_i^{\prime \prime}
	\right)
	:=
	\sum_i
	\bpi_L (A_i^{\prime \prime})
	\bpi_R (B_i^{\prime \prime})
	\quad
	(A_i^{\prime \prime}, B_i^{\prime \prime} \in \A^{\ast \ast}) ,
	\\
	\tPsi (X) := V^\ast  \tpi (X) V
	\quad (X \in \A^{\ast \ast} \atensor \A^{\ast \ast} ) .
\end{gather*}
Then for each $A \in \A ,$
\[
	\bL \circ \tPsi (A \otimes \unit_{\A^{\ast \ast}})
	=
	\Lambda \circ \Psi (A \otimes \unit_\A) = \Lambda (A) = \bL (A) ,
\]
and similarly $\bL \circ \tPsi (\unit_{\A^{\ast \ast}} \otimes A) = \bL (A) .$
Since the channels
\begin{gather*}
	\A^{\ast \ast} \ni A^{\prime \prime} \longmapsto 
	\bL \circ \tPsi (A^{\prime \prime} \otimes \unit_{\A^{\ast \ast}})
	\in \Min
	\\
	\A^{\ast \ast} \ni A^{\prime \prime} \longmapsto 
	\bL \circ \tPsi (\unit_{\A^{\ast \ast}} \otimes A^{\prime \prime}  )
	\in \Min
\end{gather*}
are normal and $\A$ is $\sigma$-weakly dense in $\A^{\ast \ast} ,$
this implies that $\tPsi$ is a broadcasting channel of $\bL .$
\end{proof}

\section{Separable states} \label{sec:sep}
For \cstar-algebras $\A$ and $\B$ and for states
$\phi \in \cS (\A)$ and $\psi \in \cS (\B) ,$
there exists a unique state $\phi \otimes \psi \in \cS (\A \otimes \B)$
such that
$( \phi \otimes \psi ) (A \otimes B) = \phi(A) \psi (B) $
$(A \in \A , B \in \B) .$
For subsets $X \subset \cS(\A)$ and $Y \subset \cS (\B) ,$
we write
\[
	X \otimes Y
	:=
	\set{  
	\phi \otimes \psi \in \cS (\A \otimes \B)
	| 
	\phi \in X, \,
	\psi \in Y
	} .
\]
The states in $\cS (\A) \otimes \cS (\B)$
are called the product states.
We define 
\[
	\Ssep (\A \otimes \B)
	:=
	\cco ( \cS (\A) \otimes \cS (\B) ) ,
\]
where the closed convex hull is with respect to the weak-$\ast$
topology $\sigma ((\A \otimes \B)^\ast , \A \otimes \B) .$
Each element of $\Ssep (\A \otimes \B)$ is called a separable state,
while each element of the complement 
$\cS (\A \otimes \B) \setminus \Ssep (\A \otimes \B)$
is called an entangled state.

\begin{lemma}\label{lemm:prodch}
Let $\Lambda \in \chset{\A}{\B}$
and 
$\Gamma \in \chset{\C}{\D}$
be channels.
Then $\omega \circ (\Lambda \otimes \Gamma) \in \Ssep (\A \otimes \C)$
for any $\omega \in \Ssep (\B \otimes \D) .$
\end{lemma}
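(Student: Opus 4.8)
The plan is to establish the claim in three steps of increasing generality, mirroring the definition $\Ssep (\B \otimes \D) = \cco ( \cS (\B) \otimes \cS (\D) )$: first on product states, then on their finite convex combinations, and finally passing to the weak-$\ast$ closure.

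First I would treat a product state $\phi \otimes \psi$ with $\phi \in \cS (\B)$ and $\psi \in \cS (\D) .$ The key identity is
\[
	(\phi \otimes \psi) \circ (\Lambda \otimes \Gamma)
	=
	(\phi \circ \Lambda) \otimes (\psi \circ \Gamma) .
\]
To verify it, evaluate both sides on an elementary tensor $A \otimes C$ $(A \in \A , C \in \C)$: using $(\Lambda \otimes \Gamma)(A \otimes C) = \Lambda (A) \otimes \Gamma (C)$ and the defining property of the product state, the left-hand side equals $\phi (\Lambda (A)) \psi (\Gamma (C)) = (\phi \circ \Lambda)(A)\, (\psi \circ \Gamma)(C) ,$ which is the right-hand side. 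Both sides are bounded linear functionals agreeing on $\A \atensor \C ,$ which is norm-dense in $\A \otimes \C ,$ so they agree on all of $\A \otimes \C .$ As $\Lambda$ and $\Gamma$ are channels, $\phi \circ \Lambda \in \cS (\A)$ and $\psi \circ \Gamma \in \cS (\C) ,$ so the right-hand side is a product state and in particular lies in $\Ssep (\A \otimes \C) .$

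Next, by linearity of precomposition, any finite convex combination $\omega = \sum_i \lambda_i\, \phi_i \otimes \psi_i$ of product states satisfies $\omega \circ (\Lambda \otimes \Gamma) = \sum_i \lambda_i\, (\phi_i \circ \Lambda) \otimes (\psi_i \circ \Gamma) ,$ again a finite convex combination of product states. Hence the adjoint map $(\Lambda \otimes \Gamma)^\ast \colon \omega \mapsto \omega \circ (\Lambda \otimes \Gamma)$ sends $\co ( \cS (\B) \otimes \cS (\D) )$ into $\co ( \cS (\A) \otimes \cS (\C) ) \subset \Ssep (\A \otimes \C) .$

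The remaining, and genuinely load-bearing, step is to pass to the weak-$\ast$ closure, and this is where I would be most careful. The map $(\Lambda \otimes \Gamma)^\ast$ is continuous from $\sigma ((\B \otimes \D)^\ast , \B \otimes \D)$ to $\sigma ((\A \otimes \C)^\ast , \A \otimes \C)$: for each fixed $X \in \A \otimes \C$ the functional $\omega \mapsto \omega ( (\Lambda \otimes \Gamma)(X) )$ is weak-$\ast$ continuous since $(\Lambda \otimes \Gamma)(X)$ is a fixed element of $\B \otimes \D ,$ and such functionals generate the target topology. Since a continuous map carries the closure of a set into the closure of its image, and $\Ssep (\A \otimes \C)$ is weak-$\ast$ closed by definition, I obtain
\[
	(\Lambda \otimes \Gamma)^\ast ( \Ssep (\B \otimes \D) )
	\subseteq
	\overline{ (\Lambda \otimes \Gamma)^\ast ( \co ( \cS (\B) \otimes \cS (\D) ) ) }
	\subseteq
	\Ssep (\A \otimes \C) ,
\]
which is the assertion. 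I expect no serious obstacle beyond bookkeeping; the only point demanding real attention is the weak-$\ast$-to-weak-$\ast$ continuity of $(\Lambda \otimes \Gamma)^\ast$ together with the standard ``image of closure'' inclusion, both of which hold for the adjoint of any bounded linear map between the underlying \cstar-algebras.
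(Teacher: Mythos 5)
Your proof is correct and follows essentially the same route as the paper: the paper's one-line argument is precisely your first step (product states pull back to product states under $(\Lambda \otimes \Gamma)^\ast$), with the passage through convex combinations and the weak-$\ast$ closure left implicit. Your explicit verification of the weak-$\ast$-to-weak-$\ast$ continuity of the adjoint and the ``image of closure lies in closure of image'' inclusion simply fills in the bookkeeping the paper's ``immediately follows'' elides.
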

\begin{proof}
The claim immediately follows from that
$\Lambda \otimes \Gamma$
maps any input product state $\phi \otimes \psi \in \cS (\B) \otimes \cS (\D)$
to the product state
$(\phi \circ \Lambda) \otimes ( \psi \circ  \Gamma) \in \cS (\A) \otimes  \cS (\C) .$
\end{proof}

\begin{lemma}\label{lemm:comm}
Let $\A$ and $\B$ be \cstar-algebras.
Suppose that $\A$ is commutative.
Then 
$\cS (\A \otimes \B) = \Ssep (\A \otimes \B) .$
\end{lemma}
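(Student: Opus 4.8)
The plan is to prove the nontrivial inclusion $\cS (\A \otimes \B) \subseteq \Ssep (\A \otimes \B)$, the reverse being immediate since $\Ssep (\A \otimes \B)$ is by definition a weak-$\ast$ closed convex subset of $\cS (\A \otimes \B)$. Because $\Ssep (\A \otimes \B) = \cco ( \cS (\A) \otimes \cS (\B) )$ is convex and weak-$\ast$ closed, and because the Krein-Milman theorem gives $\cS (\A \otimes \B) = \cco ( \Sp (\A \otimes \B) )$ exactly as recorded for state spaces in the preliminaries, it suffices to establish $\Sp (\A \otimes \B) \subseteq \cS (\A) \otimes \cS (\B)$, i.e.\ that every pure state on $\A \otimes \B$ is in fact a product state. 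Taking closed convex hulls of both sides of this inclusion would then yield the desired inclusion.

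To show a pure state is a product state, I would pass to the GNS representation $(\cH , \pi , \xi)$ of a pure state $\omega \in \Sp (\A \otimes \B)$, so that $\pi$ is irreducible and hence $\pi (\A \otimes \B)' = \cmplx \unit_\cH$. The crucial step exploits the commutativity of $\A$: for fixed $a \in \A$ the operator $\pi (a \otimes \unit_\B)$ commutes with every $\pi (a' \otimes \unit_\B)$ $(a' \in \A)$ by commutativity of $\A$, and with every $\pi (\unit_\A \otimes b)$ $(b \in \B)$ since $a \otimes b = (a \otimes \unit_\B)(\unit_\A \otimes b) = (\unit_\A \otimes b)(a \otimes \unit_\B)$. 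As the elements $a' \otimes \unit_\B$ and $\unit_\A \otimes b$ generate $\A \otimes \B$ as a \cstar-algebra (their products span the dense subalgebra $\A \atensor \B$), the operator $\pi (a \otimes \unit_\B)$ lies in the commutant $\pi (\A \otimes \B)' = \cmplx \unit_\cH$. Therefore $\pi (a \otimes \unit_\B) = \phi (a) \unit_\cH$, where $\phi (a) := \braket{\xi | \pi (a \otimes \unit_\B) \xi} = \omega (a \otimes \unit_\B)$ defines a state $\phi \in \cS (\A)$.

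Setting $\psi (b) := \omega (\unit_\A \otimes b) \in \cS (\B)$, a direct computation then gives $\omega (a \otimes b) = \braket{\xi | \pi (a \otimes \unit_\B) \pi (\unit_\A \otimes b) \xi} = \phi (a) \psi (b)$ for all $a \in \A$ and $b \in \B$, and since $\A \atensor \B$ is norm dense this identifies $\omega = \phi \otimes \psi$ as a product state, completing the reduction. The main, and essentially only, obstacle is the central observation that commutativity of $\A$ forces $\pi (\A \otimes \unit_\B)$ into the center of $\pi (\A \otimes \B)''$; everything else is the Krein-Milman reduction and routine GNS bookkeeping. I note that an alternative route via a measure-theoretic disintegration $\omega = \int_\Omega \delta_x \otimes \psi_x \, d\mu (x)$ over the Gelfand spectrum $\Omega$ (with $\A \cong C(\Omega)$) would also work, but it requires delicate measurable-selection arguments when $\Omega$ is non-metrizable, which the GNS argument above avoids entirely.
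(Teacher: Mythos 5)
Your proof is correct, and it shares the paper's overall skeleton---both arguments reduce the claim via Krein--Milman to a statement about pure states on $\A \otimes \B$ and then take closed convex hulls---but it diverges at the key step. The paper simply cites Theorem~4.4 of Takesaki for the equality $\Sp (\A \otimes \B) = \Sp (\A) \otimes \Sp (\B)$ and is done in two lines. You instead prove the needed fact from scratch: in the GNS representation $(\cH , \pi , \xi)$ of a pure state $\omega ,$ irreducibility gives $\pi (\A \otimes \B)' = \cmplx \unit_\cH ,$ and commutativity of $\A$ together with the observation that $\{ a' \otimes \unit_\B \} \cup \{ \unit_\A \otimes b \}$ is a self-adjoint generating set places each $\pi (a \otimes \unit_\B)$ in this commutant, forcing $\pi (a \otimes \unit_\B) = \omega (a \otimes \unit_\B) \unit_\cH$ and hence the factorization $\omega = \phi \otimes \psi .$ Note that your argument only establishes the inclusion $\Sp (\A \otimes \B) \subseteq \cS (\A) \otimes \cS (\B)$ (you do not show the factors $\phi , \psi$ are pure, whereas Takesaki's theorem gives equality with $\Sp (\A) \otimes \Sp (\B)$), but this weaker statement is all the Krein--Milman reduction requires, and your bookkeeping there is sound. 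What each approach buys: the paper's citation keeps the lemma short and defers to a standard reference; yours is self-contained, exposes the actual mechanism (commutativity of $\A$ forces $\pi (\A \otimes \unit_\B)$ to be central, hence scalar in an irreducible representation), and proves exactly what is needed and no more. Your closing remark that the GNS route avoids the measurable-selection issues of a disintegration over the Gelfand spectrum is also well taken.
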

\begin{proof}
By Theorem~4.4 of Ref.~\onlinecite{takesakivol1}, 
we have 
$\Sp (\A \otimes \B) = \Sp (\A) \otimes \Sp (\B) .$
Hence by the Krein-Milman theorem,
$\cS (\A \otimes \B) = \cco (  \Sp (\A) \otimes \Sp (\B)  )
\subset
\cco (  \cS (\A) \otimes \cS(\B)  )
=
\Ssep (\A \otimes \B)
\subset 
\cS (\A \otimes \B) ,
$
which proves the claim.
\end{proof}

\begin{lemma}
\label{lemm:iso}
Let $\A$ and $\B$ be \cstar-algebras.
Then the map
\[
	f \colon \cS (\A) \times \cS (\B)
	\ni
	(\phi , \psi)
	\longmapsto
	\phi \otimes \psi 
	\in 
	\cS (\A) \otimes \cS (\B)
\]
is a topological isomorphism,
where the topologies of 
$ \cS (\A) \times \cS (\B)$
and
$ \cS (\A) \otimes \cS (\B)$
are defined as the product topology of the weak-$\ast$ topologies of 
$\cS (\A)$ and $\cS(\B) ,$
and the weak-$\ast$ topology on $\cS (\A \otimes \B) ,$
respectively.
\end{lemma}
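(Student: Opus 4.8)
The plan is to show that $f$ is a continuous bijection and then to upgrade this to a homeomorphism by a compactness argument.

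First, bijectivity. Surjectivity is immediate from the definition of $\cS (\A) \otimes \cS (\B)$ as the set of product states. For injectivity, I would exploit that $\A$ and $\B$ are unital: given a product state $\phi \otimes \psi$, one recovers its factors through the marginals
\[
	(\phi \otimes \psi)(A \otimes \unit_\B) = \phi (A) ,
	\qquad
	(\phi \otimes \psi)(\unit_\A \otimes B) = \psi (B)
\]
for $A \in \A$ and $B \in \B$. Hence $\phi \otimes \psi = \phi' \otimes \psi'$ forces $\phi = \phi'$ and $\psi = \psi'$, so $f$ is injective.

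Second, continuity of $f$. Suppose $(\phi_\alpha , \psi_\alpha) \to (\phi , \psi)$ in the product topology, i.e.\ $\phi_\alpha \to \phi$ and $\psi_\alpha \to \psi$ weakly-$\ast$. On an elementary tensor,
\[
	(\phi_\alpha \otimes \psi_\alpha)(A \otimes B) = \phi_\alpha (A) \psi_\alpha (B) \longrightarrow \phi (A) \psi (B) = (\phi \otimes \psi)(A \otimes B) ,
\]
and by linearity the same convergence holds on every element of the algebraic tensor product $\A \atensor \B$. To pass to an arbitrary $X \in \A \otimes \B$ I would use that $\A \atensor \B$ is norm-dense in $\A \otimes \B$ together with the uniform bound $\norm{\phi_\alpha \otimes \psi_\alpha} = 1$: choosing $Y \in \A \atensor \B$ with $\norm{X - Y}$ small and splitting $|(\phi_\alpha \otimes \psi_\alpha)(X) - (\phi \otimes \psi)(X)|$ into three terms via $Y$ yields the desired convergence. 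Thus $f$ is continuous.

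Finally, I would conclude that $f$ is a homeomorphism. The quickest route is to observe that $\cS (\A)$ and $\cS (\B)$ are weak-$\ast$ compact, so by Tychonoff's theorem $\cS (\A) \times \cS (\B)$ is compact, while $\cS (\A) \otimes \cS (\B)$ is Hausdorff as a subspace of the weak-$\ast$ Hausdorff space $\cS (\A \otimes \B)$. Since a continuous bijection from a compact space onto a Hausdorff space is automatically a homeomorphism, $f^{-1}$ is continuous and the claim follows. (Alternatively, continuity of $f^{-1}$ can be verified directly, since the two marginal formulas above exhibit $\phi$ and $\psi$ as weak-$\ast$ continuous functions of $\phi \otimes \psi$.) There is no serious obstacle here; the only point requiring slight care is the extension of continuity from the algebraic tensor product to the completed injective tensor product, which is handled by the standard density-plus-uniform-boundedness estimate sketched above.
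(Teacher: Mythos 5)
Your proof is correct and follows essentially the same route as the paper's: both establish bijectivity, prove continuity of $f$ via convergence on $\A \atensor \B$ upgraded to $\A \otimes \B$ by norm-density plus the uniform bound $\norm{\phi_\alpha \otimes \psi_\alpha} = 1$, and conclude via the fact that a continuous bijection from a compact space to a Hausdorff space is a homeomorphism. Your proposal is in fact slightly more detailed, since the paper dismisses bijectivity as easy whereas you spell out the marginal argument using unitality.
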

\begin{proof}
We can easily see that $f$ is bijective.
Since $\cS (\A) \times \cS (\B)$ is a compact Hausdorff space,
we have only to prove the continuity of $f .$
Let $(\phi_i , \psi_i)_{i \in I}$ be a net on $\cS (\A) \times \cS (\B)$
converging to $(\phi , \psi) \in \cS (\A) \times \cS (\B) .$
Then $(\phi_i \otimes \psi_i) (X) \to (\phi \otimes \psi) (X)$
for all $X \in \A \atensor \B .$
Since the net $(\phi_i \otimes \psi_i)_{i \in I}$
is uniformly bounded
and $\A \atensor \B$ is norm dense in $\A \otimes \B ,$
this implies
$(\phi_i \otimes \psi_i) (X) \to (\phi \otimes \psi) (X)$
for all $X \in \A \otimes \B ,$
which proves the continuity of $f .$
\end{proof}

\begin{corollary}\label{coro:compact}
For \cstar-algebras $\A $ and $\B ,$
the set $\cS (\A) \otimes \cS(\B)$ of product states 
is a weakly-$\ast$ compact subset of $\cS (\A \otimes \B) .$
\end{corollary}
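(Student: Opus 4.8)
The plan is to deduce the corollary immediately from the topological isomorphism established in Lemma~\ref{lemm:iso}. First I would recall, as noted in the preliminaries, that both state spaces $\cS (\A)$ and $\cS (\B)$ are weak-$\ast$ compact, being compact convex subsets of the respective dual spaces in the weak-$\ast$ topology. Consequently the product $\cS (\A) \times \cS (\B)$, equipped with the product topology of the two weak-$\ast$ topologies, is compact; since only a product of two compact spaces is involved, this requires nothing beyond the elementary fact that a finite product of compact spaces is compact.

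Next I would invoke Lemma~\ref{lemm:iso}, which provides a topological isomorphism
\[
	f \colon \cS (\A) \times \cS (\B)
	\longrightarrow
	\cS (\A) \otimes \cS (\B) .
\]
In particular $f$ is continuous, so its image $\cS (\A) \otimes \cS (\B) = f ( \cS (\A) \times \cS (\B) )$ is weakly-$\ast$ compact as the continuous image of the compact space $\cS (\A) \times \cS (\B) .$ Finally I would observe that $\cS (\A \otimes \B)$ is itself compact Hausdorff in the weak-$\ast$ topology $\sigma ( (\A \otimes \B)^\ast , \A \otimes \B )$, so that any weakly-$\ast$ compact subset is automatically closed; this confirms that $\cS (\A) \otimes \cS (\B)$ is a weakly-$\ast$ compact subset of $\cS (\A \otimes \B) ,$ as claimed.

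I do not expect any genuine obstacle: the corollary is an immediate consequence of Lemma~\ref{lemm:iso} combined with the standard principle that the continuous image of a compact space is compact. The only point deserving a moment's attention is that one must first pass through the compactness of the two individual state spaces to secure compactness of their product, which is harmless here since the product is finite.
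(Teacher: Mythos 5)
Your proof is correct and follows exactly the route the paper intends: the paper states this corollary as an immediate consequence of Lemma~\ref{lemm:iso}, namely that $\cS(\A)\otimes\cS(\B)$ is the continuous image of the compact space $\cS(\A)\times\cS(\B)$ under the homeomorphism $f$. Your additional remark that compactness implies closedness inside the Hausdorff space $\cS(\A\otimes\B)$ is a harmless (and correct) extra observation.
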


To prove the integral representation for separable states,
we need some results of barycentric integrals on compact subsets of general 
locally convex Hausdorff spaces.~\cite{phelps2001lectures}
Let $E$ be a locally convex Hausdorff space and let $K \subset E$
be a compact subset.
A regular probability measure $ \mu \in \bfMpo (K)$
is said to represent a point $\bar{x} \in E$ if
\[
	f (\bar{x}) = \int_K f(x) d\mu (x)
\]
for any continuous linear functional $f \in E^\ast .$
By using the Hahn-Banach separation theorem, 
we can show that such a point $\bar{x} $ is,
if exists,
unique for given $\mu \in \bfMpo (K)$
and called the barycenter of $\mu .$
For $\mu \in \bfMpo (K) ,$
we denote the barycenter of $\mu $
by
\[
	\int_K x d \mu (x) ,
\]
if it exists.

\begin{proposition}[Ref.~\onlinecite{phelps2001lectures}, Proposition~1.2]
\label{prop:barycenter}
Let $E$ be a locally convex Hausdorff space and let 
$K \subset E$ be a compact subset.
Then for a point $x \in E ,$
$x \in \cco (K)$ if and only if there exists a regular probability measure
$\mu \in \bfMpo (K)$ that represents $x .$
\end{proposition}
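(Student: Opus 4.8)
The plan is to prove the two implications separately: the direction "a representing measure forces $x \in \cco(K)$" by a Hahn--Banach separation argument, and the converse by establishing existence of barycenters together with a compactness squeeze identifying the range of the barycenter map with $\cco(K)$.

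For the first implication I would argue by contraposition. If $x \notin \cco(K)$, then since $\cco(K)$ is closed and convex and $\{x\}$ is a disjoint compact convex set, the Hahn--Banach separation theorem yields $f \in E^\ast$ and $\alpha \in \realn$ with $\Real f(x) < \alpha \leq \Real f(y)$ for all $y \in \cco(K)$, in particular for all $y \in K$. If some $\mu \in \bfMpo(K)$ represented $x$, then integrating the inequality $\Real f \geq \alpha$ over $K$ against the probability measure $\mu$ would give $\Real f(x) = \int_K \Real f \, d\mu \geq \alpha$, a contradiction; hence no representing measure exists.

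For the converse, the first step is to show that every $\mu \in \bfMpo(K)$ admits a barycenter lying in $\cco(K)$, which I would do by a finite-intersection-property argument. Given finitely many $f_1,\dots,f_n \in E^\ast$, the linear map $T = (\Real f_1,\dots,\Real f_n)\colon E \to \realn^n$ sends $K$ to a compact set, and the point $\big(\int_K \Real f_i \, d\mu\big)_i$ is the ordinary center of mass of the pushforward measure $T_\ast \mu$; by the elementary finite-dimensional instance of the separation argument above, this center of mass lies in $\co(T(K)) = T(\co(K)) \subseteq T(\cco(K))$. Consequently the closed slice $\{ y \in \cco(K) : \Real f_i(y) = \int_K \Real f_i \, d\mu,\ i=1,\dots,n \}$ is nonempty, and since such slices are stable under taking larger finite families they have the finite-intersection property. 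Compactness of $\cco(K)$ then forces their total intersection to contain a point $\bar x$, which is a barycenter of $\mu$ lying in $\cco(K)$.

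The final step is to show the barycenter map exhausts $\cco(K)$. The assignment $b\colon \bfMpo(K) \to E$, $\mu \mapsto \bar x$, satisfies $f(b(\mu)) = \int_K f\,d\mu$, and since each $f \in E^\ast$ restricts to an element of $C(K)$ this makes $b$ continuous from the weak-$\ast$ topology on $\bfMpo(K)$ to the topology of $E$. As $\bfMpo(K)$ is weak-$\ast$ compact, the image $B := b(\bfMpo(K))$ is compact, hence closed; it is convex because $b(\lambda\mu + (1-\lambda)\nu) = \lambda b(\mu) + (1-\lambda) b(\nu)$; and it contains $K$ since $b(\delta_x) = x$. A closed convex set containing $K$ contains $\cco(K)$, while the first implication gives $B \subseteq \cco(K)$, so $B = \cco(K)$ and every $x \in \cco(K)$ is represented by some $\mu$. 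The main obstacle is precisely the barycenter-existence step, which rests on the compactness of $\cco(K)$: in a general locally convex space this needs a completeness hypothesis, but it is automatic in the intended application, where $K = \cS(\A) \otimes \cS(\B)$ and $\cco(K) = \Ssep(\A \otimes \B)$ lies inside the weak-$\ast$ compact state space $\cS(\A \otimes \B)$.
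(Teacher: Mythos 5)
Your reverse implication (a representing measure forces $x \in \cco(K)$) is correct, and it is the same Hahn--Banach separation argument as in the source the paper cites (Phelps, Proposition~1.2); note the paper itself gives no proof, only this citation. The genuine gap is in your forward implication. The proposition assumes only that $E$ is locally convex Hausdorff and $K \subset E$ is compact; it does \emph{not} assume $\cco(K)$ is compact, and in that generality your key intermediate claim --- that every $\mu \in \bfMpo(K)$ admits a barycenter in $E$ --- is false. Take $E = c_{00}$, the space of finitely supported sequences with the supremum norm, and $K = \{0\} \cup \{ n^{-1} e_n : n \geq 1 \}$, which is compact since $n^{-1}e_n \to 0$ in norm. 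The measure $\mu = \sum_{n \geq 1} 2^{-n} \delta_{n^{-1} e_n} \in \bfMpo(K)$ has no barycenter in $E$: testing against the coordinate functionals shows any representing point $x$ would need $x_k = 2^{-k}k^{-1} \neq 0$ for every $k$, impossible for a finitely supported sequence. So your finite-intersection-property step, which genuinely requires compactness of $\cco(K)$, cannot be repaired within the stated hypotheses. Your closing remark that the result ``needs a completeness hypothesis'' in general conflates the proposition with the barycenter-existence lemma: the proposition is true without any completeness assumption; only your route to it is not.

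The standard proof (Phelps') constructs the measure from the point rather than the point from the measure, which is exactly what sidesteps the issue. Given $x \in \cco(K)$, choose a net $x_\alpha \in \co(K)$ with $x_\alpha \to x$; each $x_\alpha = \sum_i \lambda_i^{(\alpha)} y_i^{(\alpha)}$ is represented by the finitely supported probability measure $\mu_\alpha = \sum_i \lambda_i^{(\alpha)} \delta_{y_i^{(\alpha)}} \in \bfMpo(K)$. Since $\bfMpo(K)$ is weakly-$\ast$ compact in $C(K)^\ast$, some subnet $\mu_\beta$ converges weakly-$\ast$ to a $\mu \in \bfMpo(K)$, and for every $f \in E^\ast$ (whose restriction to $K$ lies in $C(K)$) one gets
\[
	f(x) = \lim_\beta f(x_\beta) = \lim_\beta \int_K f \, d\mu_\beta = \int_K f \, d\mu ,
\]
so $\mu$ represents $x$. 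No barycenter map and no compactness of $\cco(K)$ is needed. Your argument does become a correct --- and in one respect more informative --- proof under the added hypothesis that $\cco(K)$ is compact (modulo the minor point that your map $b$ is continuous into the weak topology $\sigma(E, E^\ast)$ rather than the original one, which suffices because compact convex sets in the weak topology are closed in the original topology), and that hypothesis does hold in the paper's application, where $\cco(K) = \Ssep(\A \otimes \B)$ is a closed subset of the weakly-$\ast$ compact state space $\cS(\A \otimes \B)$. But as a proof of the proposition as stated, it has a real gap.
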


From Corollary~\ref{coro:compact} and Proposition~\ref{prop:barycenter}
we obtain

\begin{corollary}\label{coro:sep}
Let $\A$ and $\B$ be \cstar-algebras.
Then for any separable state $\omega \in \Ssep (\A \otimes \B)$
there exists a regular probability measure 
$\nu \in \bfMpo (\cS (\A) \otimes \cS (\B))$
such that
\[
	\omega
	=
	\int_{\cS (\A) \otimes \cS (\B)}
	\phi \otimes \psi \,
	d\nu 
	(\phi \otimes \psi ) .
\]
\end{corollary}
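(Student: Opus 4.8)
The plan is to apply Proposition~\ref{prop:barycenter} with $E$ the dual space $(\A \otimes \B)^\ast$ equipped with the weak-$\ast$ topology $\sigma((\A \otimes \B)^\ast , \A \otimes \B)$ and with $K = \cS (\A) \otimes \cS (\B)$ the set of product states. First I would record that $E$ is a locally convex Hausdorff space and that its topological dual is canonically $\A \otimes \B$: every weak-$\ast$ continuous linear functional on $(\A \otimes \B)^\ast$ is given by evaluation at a unique element $X \in \A \otimes \B .$ This is the standard duality fact for weak-$\ast$ topologies, and it is what will let me translate an abstract barycenter identity into the desired integral over $\A \otimes \B .$

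Next I would observe that $K = \cS (\A) \otimes \cS (\B)$ is a weakly-$\ast$ compact subset of $E$ by Corollary~\ref{coro:compact}, and that, by the very definition $\Ssep (\A \otimes \B) = \cco (\cS (\A) \otimes \cS (\B))$, the given separable state $\omega$ lies in $\cco (K) .$ Proposition~\ref{prop:barycenter} then yields a regular probability measure $\nu \in \bfMpo (K)$ that represents $\omega ,$ i.e.\ $f (\omega) = \int_K f \, d\nu$ for every continuous linear functional $f \in E^\ast .$

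It remains to unwind the representing condition. Using the identification of $E^\ast$ with $\A \otimes \B$ from the first step, the equality $f (\omega) = \int_K f \, d\nu$ applied to the functional given by evaluation at $X$ reads
\[
\omega (X) = \int_{\cS (\A) \otimes \cS (\B)} (\phi \otimes \psi)(X) \, d\nu (\phi \otimes \psi)
\quad (X \in \A \otimes \B) ,
\]
which is exactly the assertion that $\omega$ is the barycenter $\int_K \phi \otimes \psi \, d\nu (\phi \otimes \psi) .$ The two genuinely nontrivial ingredients, namely the weak-$\ast$ compactness of the product states and the existence of a representing measure for points of a closed convex hull, are supplied by Corollary~\ref{coro:compact} and Proposition~\ref{prop:barycenter} respectively, so the argument is essentially a direct citation. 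The one point demanding care, and hence the main (albeit modest) obstacle, is the identification of $E^\ast$ with $\A \otimes \B$: the barycenter in Proposition~\ref{prop:barycenter} is defined relative to the continuous dual of $E ,$ and one must verify that this dual coincides with evaluation against $\A \otimes \B$ so that the resulting identity is precisely the sought weak-$\ast$ integral representation rather than a representation against some larger space of functionals.
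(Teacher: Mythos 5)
Your proposal is correct and takes essentially the same route as the paper, which obtains Corollary~\ref{coro:sep} directly from Corollary~\ref{coro:compact} and Proposition~\ref{prop:barycenter} exactly as you do. The only difference is that you make explicit the routine identification of the continuous dual of $\bigl( (\A \otimes \B)^\ast , \, \sigma((\A \otimes \B)^\ast , \A \otimes \B) \bigr)$ with $\A \otimes \B$, a step the paper leaves implicit.
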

Corollary~\ref{coro:sep} is a \cstar-algebra version of the integral representation 
for infinite-dimensional separable density operators
obtained in Ref.~\onlinecite{holevo2005separability}.

In the case of state space on a \cstar-algebra, 
the barycentric integral can be generalized to the weakly-$\ast$ measurable 
state-valued maps as follows.
Let $\A $ be a \cstar-algebra and let $(\Omega , \Sigma , \mu)$
be a probability space.
A map $\Omega \ni x \mapsto \phi_x \in \cS (\A)$ is called 
weakly-$\ast$ measurable if 
$\Omega \ni x \mapsto \phi_x (A) \in \cmplx$
is $\Sigma$-measurable for all $A \in \A .$
For such a map $\Omega \ni x \mapsto \phi_x \in \cS (\A)$
we define the integral state 
$\int_\Omega \phi_x d\mu (x) \in \cS (\A)$ by
\[
	\left(
	\int_\Omega \phi_x d\mu (x)
	\right)
	(A)
	:=
	\int_\Omega
	\phi_x (A) d\mu (x)
	\quad
	(A\in \A) .
\]

\section{Characterizations of EB channels} \label{sec:main}
In this section, we give equivalent characterizations 
of the EB condition for a channel (Theorem~\ref{theo:main}).

\begin{definition}\label{defi:eb}
A channel $\Lambda \in \chset{\A}{\Ain}$ is called EB if
$\omega \circ (\Lambda \otimes \id_\B) \in \Ssep (\A \otimes \B)$
for any \cstar-algebra $\B$ and 
any input state $\omega \in \cS (\Ain \otimes \B) .$
We denote by $\chseteb{\A}{\Ain}$ the set of EB channels 
from $\A$ to $\Ain .$
\end{definition}
In what follows in this section, we fix an input Hilbert space $\Hin ,$
which may be separable or non-separable,
and an orthonormal basis $(\xi_i)_{i \in I}$ of $\Hin .$
We write the set of finite subsets of $I$ as $\FI ,$
which is directed by the set inclusion,
and for each $F \in \FI ,$ define $P_F, $ $\cH_F ,$
$\eta_F \in \cH_F \otimes \cH_F ,$
and $\omega_F \in \cS (\LHin \otimes \calL (\cH_F))$ by
\begin{gather*}
	P_F :=
	\sum_{i \in F}
	\ket{\xi_i} \bra{\xi_i} ,
	\\
	\cH_F := P_F \Hin ,
	\\
	\eta_F
	:=
	|F|^{-1/2}
	\sum_{i \in F}
	\xi_i \otimes \xi_i ,
	\\
	\omega_F (A)
	:=
	\braket{\eta_F | A \eta_F}
	=
	|F|^{-1}
	\sum_{i,j\in F}
	\braket{\xi_i \otimes \xi_i | A (\xi_j \otimes \xi_j)}
	\quad
	(A \in \LHin \otimes \calL (\cH_F)) ,
\end{gather*}
where $| F |$ denotes the number of elements of $F .$
Since $\omega_F (P_F \otimes P_F) = 1 ,$
we can regard $\omega_F$ as a state in 
$\cS (\calL (\cH_F) \otimes \calL (\cH_F) ) ,$
which we also write as $\omega_F .$
The state $\omega_F$ is a maximally entangled state on $\cH_F \otimes \cH_F .$

Now we are in a position to prove the main result of this paper:
\begin{theorem}\label{theo:main}
Let $\Ain$ be a unital \cstar-subalgebra of $\LHin ,$
let $\A$ be a \cstar-algebra,
let $\Lambda \in \chset{\A}{\Ain} \subset \chset{\A}{\LHin}$ 
be a channel,
and let $\bL \in \nchset{\A^{\ast \ast}}{\LHin}$ be the normal extension of 
$\Lambda .$
Then the following conditions are equivalent.
\begin{enumerate}[(i)]
\item \label{it:eb1}
$\Lambda \in \chseteb{\A}{\Ain} .$
\item \label{it:eb2}
$\Lambda \in \chseteb{\A}{\LHin} .$
\item \label{it:eb3}
$\bL \in \chseteb{\A^{\ast \ast}}{\LHin} .$
\item \label{it:bell}
$\omega_F \circ (\Lambda \otimes \id_{\LHF}) \in \Ssep (\A \otimes \LHF)$
for all $F \in \FI .$
\item \label{it:rand1}
There exists a channel $\Gamma \in \chset{\C}{\LHin}$
with a commutative outcome \cstar-algebra $\C$ 
such that $\Lambda \cocp \Gamma .$
\item \label{it:rand2}
There exists a QC channel 
$\tG \in \nchset{\M}{\LHin}$ 
with a commutative outcome von Neumann algebra $\M$ such that
$\Lambda \cocp \tG .$
\item \label{it:holevo}
(Measurement-prepare form).
There exists a POVM $(\Omega , \Sigma , \oM)$
on $\Hin $ and a weakly-$\ast$ measurable map
$\Omega \ni x \mapsto \phi_x \in \cS (\A)$ 
such that
\begin{equation}
	\Lambda (A)
	=
	\int_\Omega \phi_x (A) d \oM (x)
	\quad
	(A \in \A) .
	\label{eq:holevo}
\end{equation}
\item \label{it:njoint}
$\Lambda$ has an $n$-joint channel 
$\Theta_n \in \chset{\A^{\otimes n}}{\LHin}$
for all $2 \leq n < \infty .$
\item \label{it:ijoints}
$\Lambda$ has a symmetric $\omega$-joint channel
$\Theta_\infty \in \chset{\A^{\otimes \infty}}{\LHin} .$
\item \label{it:ijoint}
$\Lambda$ has an $\omega$-joint channel
$\Theta_\infty^\prime \in \chset{\A^{\otimes \infty}}{\LHin} .$
\end{enumerate}
\end{theorem}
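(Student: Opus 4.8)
The plan is to split the ten conditions into a short list of easy ``downward'' implications and two genuinely non-trivial ``upward'' implications that close the cycle through the no-broadcasting Theorem~\ref{theo:nb}. The downward implications are essentially restrictions. The chain (\ref{it:eb3})$\Rightarrow$(\ref{it:eb2})$\Rightarrow$(\ref{it:eb1})$\Rightarrow$(\ref{it:bell}) holds because separability is preserved under restriction of states to subalgebras (the restriction map is weak-$\ast$ continuous and carries product states to product states), because $\bL$ restricts to $\Lambda$ on $\A$, and because each $\omega_F$ is a particular input state in $\cS(\LHin\otimes\LHF)$. Likewise (\ref{it:rand2})$\Rightarrow$(\ref{it:rand1}) is immediate (a QC channel has commutative outcome algebra), and (\ref{it:ijoints})$\Rightarrow$(\ref{it:ijoint})$\Rightarrow$(\ref{it:njoint}) follows by restricting an $\omega$-joint channel to $\Atn\otimes\unit_{\Ainf}$.

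Next I would treat the commutative/Holevo block. First establish (\ref{it:rand1})$\Leftrightarrow$(\ref{it:holevo}) via the Riesz--Markov--Kakutani correspondence of Proposition~\ref{prop:RMK}: a factorization $\Lambda=\Gamma\circ\alpha$ with $\Gamma\in\chset{C(\Omega)}{\LHin}$ produces a regular POVM $\oM$, and $\phi_x:=\mathrm{ev}_x\circ\alpha\in\cS(\A)$ furnishes the weakly-$\ast$ measurable family; conversely the Holevo form factors as $\Gamma\circ\alpha$ through $B(\Omega,\Sigma)$. From such a factorization, (\ref{it:holevo})$\Rightarrow$(\ref{it:eb2}) follows from Lemma~\ref{lemm:comm} (every state on $\C\otimes\B$ is separable when $\C$ is commutative) together with Lemma~\ref{lemm:prodch} (the channel $\alpha\otimes\id_\B$ preserves separability). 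The same factorization yields self-compatibility: setting $\alpha_\infty(X)(x):=\phi_x^{\otimes\infty}(X)$, a positive and hence (commutative codomain) completely positive map into $B(\Omega,\Sigma)$, the channel $\Theta_\infty:=\Gamma\circ\alpha_\infty$ is a symmetric $\omega$-joint channel, giving (\ref{it:holevo})$\Rightarrow$(\ref{it:ijoints}). Finally (\ref{it:rand2}) also gives (\ref{it:eb3}): since $\bL$ is the least normal channel above $\Lambda$, $\Lambda\cocp\tG$ forces $\bL\cocp\tG$, and the argument for (\ref{it:holevo})$\Rightarrow$(\ref{it:eb2}) applies verbatim to $\bL$.

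The crux is to return from the test conditions (\ref{it:bell}) and (\ref{it:njoint}) to (\ref{it:rand2}); both pass through broadcastability and Theorem~\ref{theo:nb}. For (\ref{it:bell})$\Rightarrow$(\ref{it:rand2}): for each $F\in\FI$, apply Corollary~\ref{coro:sep} to the separable state $\omega_F\circ(\Lambda\otimes\id_{\LHF})$ to obtain a representing measure on $\cS(\A)\otimes\cS(\LHF)$. Because $\omega_F$ is maximally entangled on $\cH_F\otimes\cH_F$, the Choi-type correspondence turns this decomposition into a measure-and-prepare expression for the compression $P_F\Lambda(\cdot)P_F$, the $\cS(\A)$-marginals serving as preparations and the $\cS(\LHF)$-marginals assembling a POVM on $\cH_F$; such a compressed channel is broadcastable. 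The resulting net of broadcasting channels lies in the weak-$\ast$ compact set of unital CP maps $\A\otimes\A\to\calL(\cH_\A)$, so a subnet converges as $F\uparrow I$ (using $P_F\to\unit_{\Hin}$ strongly) to a broadcasting channel of $\bL$ (into $\A^{\ast\ast}$), and Theorem~\ref{theo:nb} then yields that $\bL$ is randomization-equivalent to a QC channel, i.e.\ (\ref{it:rand2}) via $\Lambda\cocp\bL$. For the compatibility side I would first promote (\ref{it:njoint}) to (\ref{it:ijoint}): after symmetrizing the $n$-joint channels, extract a subnet of unital CP maps into $\LHin$ convergent on the dense subalgebra $\bigcup_n\Atn\otimes\unit_{\Ainf}$, producing a symmetric $\omega$-joint channel. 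An $\omega$-joint channel in turn forces broadcastability of $\bL$ (two copies split off from the infinitely many compatible ones), so Theorem~\ref{theo:nb} and Lemma~\ref{lemm:bc} again deliver (\ref{it:rand2}).

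The main obstacle is precisely these two limiting passages -- the net over $\FI$ in (\ref{it:bell})$\Rightarrow$(\ref{it:rand2}) and the promotion (\ref{it:njoint})$\Rightarrow$(\ref{it:ijoint}) -- since, unlike the finite-dimensional proof, no dimension-dependent norm estimate is available; the entire weight falls on weak-$\ast$ compactness of the relevant sets of channels and on the no-broadcasting theorem, which converts broadcastability directly into the QC-randomization (\ref{it:rand2}) without ever constructing the limiting POVM by hand. Care is also needed to ensure that the separable decompositions supplied by Corollary~\ref{coro:sep} have their preparation marginals genuinely in $\cS(\A)$ and not merely in $\cS(\LHF)$, so that the limiting broadcasting channel is defined on $\A$ (or $\A^{\ast\ast}$) rather than only on the compressions $\cH_F$.
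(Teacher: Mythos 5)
Your high-level architecture (easy restriction implications downward, the no-broadcasting Theorem~\ref{theo:nb} as the engine for the hard upward implications) matches the paper's in spirit, and your Holevo-form block is essentially sound --- the direct construction of a symmetric $\omega$-joint channel via $\alpha_\infty(X)(x):=\phi_x^{\otimes\infty}(X)$ is even a nice shortcut past the paper's Tychonoff limit of $n$-joint channels. But both of your crucial upward steps rest on a false premise: that channels of measure-and-prepare type (the compressions $P_F\Lambda(\cdot)P_F$ in your \eqref{it:bell}$\Rightarrow$\eqref{it:rand2}) or the normal extension $\bL$ of an infinitely self-compatible channel (in your \eqref{it:ijoint}$\Rightarrow$\eqref{it:rand2}) are broadcastable. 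By Theorem~\ref{theo:nb}, broadcastability of a normal channel is equivalent to randomization-\emph{equivalence} with a QC channel, which is strictly stronger than being a randomization \emph{of} a QC channel, i.e.\ stronger than being EB: in the Schr\"odinger picture broadcastability forces the output states of the channel to commute pairwise, and a measure-and-prepare channel preparing non-commuting states violates this. The paper's own Section~\ref{sec:ex} supplies the counterexample: $\Lambda_\M$ with non-commutative $\M$ is normal, injective and EB --- so it has a Holevo form and $\omega$-joint channels --- yet it is not randomization-equivalent to any QC channel, hence by Theorem~\ref{theo:nb} not broadcastable; the same applies to its compressions and to $\bL\eqcp\Lambda_\M .$ So your two routes, if they worked, would prove a statement that is actually false, and the steps ``such a compressed channel is broadcastable'' and ``an $\omega$-joint channel forces broadcastability of $\bL$'' are exactly where they break.

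The correct mechanism, as in the paper, never asserts broadcastability of $\Lambda ,$ $\bL ,$ or compressions, but of the symmetric $\omega$-joint channel $\Theta_\infty$ \emph{itself}: the interleaving representation $\trho\colon\Ainf\otimes\Ainf\to\Ainf$ (shuffling the two infinite tensor factors into alternating slots) satisfies $\Theta_\infty\circ\trho\left((X\otimes\unit_{\Ainf})\otimes\unit_{\Ainf}\right)=\Theta_\infty(X\otimes\unit_{\Ainf})=\Theta_\infty\circ\trho\left(\unit_{\Ainf}\otimes(X\otimes\unit_{\Ainf})\right)$ precisely because $\Theta_\infty$ is symmetric, so $\trho$ is a broadcasting channel of $\Theta_\infty ;$ Lemma~\ref{lemm:bc} and Theorem~\ref{theo:nb} then make the normal extension of $\Theta_\infty$ equivalent to a QC channel $\tG ,$ and \eqref{it:rand2} follows from $\Lambda\cocp\Theta_\infty\cocp\tG$ --- note that $\Lambda$ ends up strictly \emph{below} the QC channel, never equivalent to it. Likewise, the paper's \eqref{it:bell}$\Rightarrow$\eqref{it:rand1} avoids broadcasting altogether: the Radon--Nikodym theorem for CP maps converts each separable decomposition of $\omega_F\circ(\Lambda\otimes\id_{\LHF})$ into a POVM $\oM_F$ on the \emph{fixed} compact space $\SA ,$ so that $P_F\Lambda(\cdot)P_F=\Gamma_F\circ\alpha$ with the fixed evaluation channel $\alpha\colon\A\to C(\SA) ,$ and one takes a $\sigma$-weak cluster point of the net $(\Gamma_F)_F$ alone; the factorization survives the limit because $\alpha$ does not vary. (A smaller point: your downward chain also needs a Hahn--Banach extension of states for \eqref{it:eb2}$\Rightarrow$\eqref{it:eb1}; restriction goes the wrong way there.)
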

\begin{proof}
\eqref{it:eb1}$\implies$\eqref{it:eb2}.
Assume \eqref{it:eb1} and take a \cstar-algebra $\B$
and a state $\omega \in \cS (\LHin \otimes \B) .$
Since $\Ain \otimes \B \subset \LHin \otimes \B ,$
by the assumption we have
$\omega \circ (\Lambda \otimes \id_\B) 
= \omega \rvert_{\Ain \otimes \B} \circ (\Lambda \otimes \id_\B) 
\in \Ssep (\A \otimes \B) ,
$
which implies \eqref{it:eb2}.

\eqref{it:eb2}$\implies$\eqref{it:eb1}.
Assume \eqref{it:eb2} and take a \cstar-algebra $\B$
and a state $\omega \in \cS (\Ain \otimes \B) .$
From $\Ain \otimes \B \subset \LHin \otimes \B $
and the Hahn-Banach theorem,
$\omega$ extends to a linear functional $\tomega \in (\LHin \otimes \B)^\ast$
such that $\| \tomega \| = \| \omega \| =1 .$
Since 
$\tomega (\unit_{\LHin \otimes \B}) 
=  \omega( \unit_{\Ain \otimes \B})
= 1 = \| \tomega \| ,$
$\tomega$ is a state on $\LHin \otimes \B $
by Proposition~1.5.2 of Ref.~\onlinecite{sakaibook}.
Thus by the assumption we have
$\omega \circ (\Lambda \otimes \id_\B) = \tomega \circ (\Lambda \otimes \id_\B)
\in \Ssep (\A \otimes \B),
$
which implies \eqref{it:eb1}.

\eqref{it:eb2}$\implies$\eqref{it:bell} is obvious.

\eqref{it:bell}$\implies$\eqref{it:rand1}.
Assume \eqref{it:bell}.
For each $F \in \FI ,$ we define $\LF \in \chset{\A}{\LHF}$ by
$\LF (A) := P_F \Lambda (A) P_F $
$(A\in \A ) .$
By assumption we have
$\omega_F \circ (\Lambda_F \otimes \id_{\LHF})
= \omega_F \circ (\Lambda \otimes \id_{\LHF})
\in \Ssep (\A \otimes \LHF) .
$
Therefore by Corollary~\ref{coro:sep} there exists a regular probability measure
$\nu_F \in \bfMpo (\cS (\A) \otimes \cS (\LHF))$
such that
\[
	\omega_F \circ (\Lambda_F \otimes \id_{\LHF})
	=
	\int_{\cS (\A) \otimes \cS (\LHF)}
	\phi \otimes \psi \,
	d\nu_F (\phi \otimes \psi).
\]
For each Borel set $E \in \B (\SA),$ 
we define a positive linear functional 
$\tphi_E \in \LHF^\ast$ by
\[
	\tphi_E (B)
	:=
	\int_{E \otimes \cS (\LHF )}
	\psi(B)
	d\nu_F 
	(\phi \otimes \psi)
	\quad
	(B \in \LHF).
\]
We write $\tphi := \tphi_{\SA} ,$ 
which is a state on $\LHF$
given by
\begin{equation}
	\tphi (B)
	=
	\omega_F
	\circ
	(\Lambda_F \otimes \id_{\LHF})
	(\unit_{\A} \otimes B )
	=
	\braket{\eta_F | (P_F \otimes B) \eta_F} 
	\quad (B \in \LHF) .
	\label{eq:tphi}
\end{equation}
Then the RHS of \eqref{eq:tphi}
is a GNS representation of $\tphi .$
Furthermore, by the $\sigma$-additivity of $\nu_F ,$
the map 
$\B (\SA) \ni E \mapsto \tphi_E (B) \in \cmplx$
is also $\sigma$-additive for all $B \in \LHF .$
Therefore application of the Radon-Nikodym theorem 
for CP maps (e.g.\ Theorem~III.1 of Ref.~\onlinecite{doi:10.1063/1.1615697})
to $\tphi \in \chset{\LHF}{\cmplx} $
yields that
there exists a unique POVM $(\SA , \B (\SA) , \oM_F)$
on $\cH_F$ such that
\[
	\tphi_E (B)
	=
	\braket{\eta_F | (\oM_F (E) \otimes B) \eta_F}
	\quad
	(E \in \B (\SA) , B \in \LHF  ) .
\]
We define channels 
$\alpha \in \chset{\A}{C (\SA)}$
and
$\Gamma_F \in \chset{C ( \SA )}{\LHF}$
by
\begin{gather*}
	\alpha (A) (\phi)
	:=
	\phi (A)
	\quad
	(A \in \A , \phi \in \SA) ,
	\\
	\Gamma_F (f)
	:=
	\int_{\SA}
	f (\phi)
	d \oM_F (\phi)
	\quad
	(f \in C (\SA ) ) .
\end{gather*}
Then for each $i,j \in F$
and each 
$E \in \B (\SA ) ,$
\[
	\braket{\eta_F| 
	(\oM_F (E) \otimes \ket{\xi_i} \bra{\xi_j})
	\eta_F
	}
	=
	\tphi_E (\ket{\xi_i } \bra{\xi_j})
	=
	\int_{E \otimes \cS (\LHF)}
	\psi (\ket{\xi_i } \bra{\xi_j})
	d\nu_F (\phi \otimes \psi) .
\]
Hence for each $i, j \in F $
and each $A \in \A  ,$
\begin{align*}
	\braket{\xi_i | \Gamma_F \circ \alpha (A) \xi_j}
	&=
	\int_{\SA}
	\phi(A) 
	d \braket{\xi_i| \oM_F (\phi) \xi_j}
	\\
	&=
	| F |
	\int_{\SA}
	\phi (A)
	d \braket{\eta_F| 
	(\oM_F (\phi) \otimes \ket{\xi_i} \bra{\xi_j})
	\eta_F
	}
	\\
	&= |F|
	\int_{\SA \otimes \cS (\LHF )}
	\phi (A) 
	\psi(\ket{\xi_i} \bra{\xi_j})
	d \nu_F (\phi \otimes \psi)
	\\
	&=
	|F|
	\left(
	\omega_F \circ (\Lambda_F \otimes \id_{\LHF})
	\right)
	(A \otimes \ket{\xi_i} \bra{\xi_j} )
	\\
	&= 
	\braket{\xi_i | \Lambda_F (A) \xi_j} .
\end{align*}
Therefore 
$\Lambda_F = \Gamma_F \circ \alpha .$
Since a closed ball in $\LHin$ is $\sigma$-weakly compact,
Tychonoff\rq{}s theorem implies that
there exist
a subnet $(\Gamma_{F(j)})_{j \in J}$
and a bounded linear map 
$\Gamma \colon C (\SA) \to \LHin$
such that
$\Gamma_{F(j)} (f) 
\xrightarrow{\text{$\sigma$-weakly}}
\Gamma (f)
$
for all $f \in C (\SA ) .$
By the complete positivity of each $\Gamma_F ,$
$\Gamma $ is also CP.
From
\[
	\Gamma (\unit_{C (\SA )})
	=
	\swlim_{j \in J} \Gamma_{F(j)} 
	(\unit_{C (\SA )})
	=
	\swlim_{j \in J} P_{F(j)}
	= \unit_{\Hin} ,
\]
where $\swlim$ denotes the $\sigma$-weak limit,
we have 
$\Gamma \in \chset{C(\SA ) }{\LHin} .$
Moreover for each $A \in \A $
\[
	\Gamma \circ \alpha (A)
	=
	\swlim_{j \in J}
	\Gamma_{F(j)} \circ \alpha (A)
	=
	\swlim_{j \in J}
	\Lambda_{F(j)} (A)
	= \Lambda (A),
\]
where we used 
$P_F B P_F \xrightarrow{\text{$\sigma$-strongly}}
B$
$(B \in \LHin)$
in the third equality.
Therefore 
$\Lambda = \Gamma \circ \alpha \cocp \Gamma .$
Since the outcome space $C (\SA ) $
of $\Gamma$ is commutative, 
this implies \eqref{it:rand1}.

\eqref{it:rand1}$\implies$\eqref{it:rand2}.
Let $\Gamma \in \chset{\C}{\LHin}$ be a channel 
such that $\C$ is commutative and 
$\Lambda \cocp \Gamma .$
Then the outcome space $\C^{\ast \ast}$
of the normal extension 
$\tG \in \nchset{\C^{\ast \ast}}{\LHin}$
of $\Gamma $
is also commutative
and hence $\tG$ is a QC channel.
Moreover we have
$\Lambda \cocp \Gamma \cocp \tG ,$
which implies \eqref{it:rand2}.

\eqref{it:rand2}$\implies$\eqref{it:rand1} is obvious.

\eqref{it:rand1}$\implies$\eqref{it:holevo}.
Assume \eqref{it:rand1} and take channels
$\Gamma \in \chset{\C}{\LHin}$
and
$\beta \in \chset{\A}{\C}$
such that $\C$ is commutative 
and $\Lambda = \Gamma \circ \beta .$
By the Gelfand representation we may assume that
$\C = C(\Omega)$ for some compact Hausdorff space $\Omega .$
Then by Proposition~\ref{prop:RMK} there exists a unique
regular POVM $(\Omega , \B (\Omega) , \oM)$
on $\Hin$ such that
\[
	\Gamma (f)
	= \int_\Omega f (x) d\oM (x)
	\quad
	(f \in C (\Omega)) .
\]
If we define $\phi_x \in \SA$ by $\phi_x (A) := \beta(A) (x) $
$(A \in \A)$
for each $x \in \Omega ,$
the map $\Omega \ni x \mapsto \phi_x \in \SA$
is weakly-$\ast$ measurable.
Moreover for each $A \in \A$
\[
	\Lambda (A)
	=
	\Gamma \circ \beta (A)
	=
	\int_\Omega \beta (A) (x) d \oM (x)
	=
	\int_\Omega \phi_x (A) d \oM (x),
\]
which implies \eqref{it:holevo}.

\eqref{it:holevo}$\implies$\eqref{it:rand1}.
Assume \eqref{it:holevo} and take a POVM $(\Omega , \Sigma , \oM)$ on $\Hin$
and a weakly-$\ast$ measurable map 
$\Omega \ni x \mapsto \phi_x \in \SA $
satisfying \eqref{eq:holevo}.
We define channels $\gamma \in \chset{\A}{B(\Omega , \Sigma)}$
and $\Gamma^\prime \in \chset{B(\Omega , \Sigma)}{\LHin}$
by
\begin{gather*}
	\gamma (A) (x)
	:=
	\phi_x (A)
	\quad
	(A \in \A , x \in \Omega),
	\\
	\Gamma^\prime (f)
	:=
	\int_\Omega 
	f (x) d \oM (x)
	\quad
	(f \in B(\Omega , \Sigma)) .
\end{gather*}
Then by the assumption~\eqref{eq:holevo}
we have $\Lambda = \Gamma^\prime \circ \gamma \cocp \Gamma^\prime .$
Since $B(\Omega, \Sigma)$ is commutative, 
this implies \eqref{it:rand1}.

\eqref{it:rand1}$\implies$\eqref{it:eb2}.
Assume \eqref{it:rand1} and take channels
$\Gamma \in \chset{\C}{\LHin}$
and
$\alpha \in \chset{\A}{\C}$
such that $\C$ is commutative 
and $\Lambda = \Gamma \circ \alpha .$
Then for any \cstar-algebra $\B$
and any state $\omega \in \cS (\LHin \otimes \B) ,$
Lemma~\ref{lemm:comm} implies
$\omega \circ (\Gamma \otimes \id_\B) \in \Ssep (\C \otimes \B) .$
Hence by Lemma~\ref{lemm:prodch}
\[
	\omega \circ (\Lambda \otimes \id_\B)
	=
	\omega \circ 
	(\Gamma \otimes \id_\B)
	\circ
	(\alpha \otimes \id_\B)
	\in \Ssep (\A \otimes \B) ,
\]
which proves \eqref{it:eb2}.

\eqref{it:rand1}$\implies$\eqref{it:njoint}.
Assume \eqref{it:rand1} and take channels
$\Gamma \in \chset{\C}{\LHin}$
and
$\alpha \in \chset{\A}{\C}$
such that $\C$ is commutative 
and $\Lambda = \Gamma \circ \alpha .$
By the commutativity of $\C , $
for each $2 \leq n < \infty $
there exists a channel
$\alpha_n \in \chset{\Atn}{\C}$ such that
\[
	\alpha_n (A_1 \otimes A_2 \otimes \dots \otimes A_n)
	=
	\alpha (A_1) \alpha (A_2) \cdots \alpha (A_n)
	\quad
	(A_1 , A_2 , \dots , A_n \in \A) .
\]
Then $\Gamma \circ \alpha_n \in \chset{\Atn}{\LHin}$
is an $n$-joint channel of $\Lambda .$

\eqref{it:njoint}$\implies$\eqref{it:ijoints}.
Assume \eqref{it:njoint} and take a symmetric $n$-joint channel
$\Theta_n \in \chset{\Atn}{\LHin}$
of $\Lambda$ for each $2 \leq n < \infty .$
We write $\tA_0 := \bigcup_{n \geq 1} \Atn \otimes \unit_{\Ainf} ,$
which is a norm dense $\ast$-subalgebra of $\Ainf ,$
and for each $2\leq n < \infty$
define a map $\Xi_n \colon \tA_0 \to \LHin$
by
\[
	\Xi_n (Y)
	:=
	\begin{cases}
	\Theta_n (X)
	&
	\text{if $Y = X \otimes \unit_{\Ainf}$ and $ X \in \Atn  ;$}
	\\
	0
	&
	\text{otherwise .}
	\end{cases}
\]
Since $\| \Xi_n (Y) \| \leq  \| Y \|$ for all $Y \in \tA_0$
and all $2 \leq n < \infty ,$
by Tychonoff\rq{}s theorem there exists a subnet
$(\Xi_{n(k)})_{k \in K}$
such that the $\sigma$-weak limit 
$\Theta_0 (Y) := \swlim_{k \in K} \Xi_{n(k)} (Y)$
exists for all $Y \in \tA_0 .$
Since $\Xi_{n(k)} (c_1 Y_1 + c_2 Y_2) = c_1 \Xi_{n(k)} (Y_1) + c_2 \Xi_{n(k)} (Y_2)$
eventually for each $Y_1 , Y_2 \in \tA_0$
and each $c_1 , c_2 \in \cmplx ,$
$\Theta_0 \colon \tA_0 \to \LHin$ is a bounded linear map
and hence uniquely extends to a bounded linear map
$\Theta_\infty \colon \Ainf \to \LHin .$
Then,
by the complete positivity of each $\Theta_n ,$
$\Theta_\infty $ is a channel in
$\chset{\Ainf}{\LHin} .$
By the symmetry of each $\Theta_n ,$
$\Theta_\infty$ is also symmetric.
Moreover
\begin{equation}
	\Theta_\infty
	(A \otimes \unit_{\Ainf})
	=
	\swlim_{k \in K}
	\Theta_{n(k)}
	(A \otimes \unit_{\A^{\otimes (n(k)-1)}})
	=\Lambda (A)
	\quad
	(A \in \A) .
	\label{eq:thinf}
\end{equation}
From the symmetry of $\Theta_\infty ,$
Eq.~\eqref{eq:thinf}
implies that $\Theta_\infty$
is a symmetric $\omega$-joint channel of $\Lambda, $
which proves \eqref{it:ijoints}.

\eqref{it:ijoints}$\implies$\eqref{it:ijoint} is obvious.

\eqref{it:ijoint}$\implies$\eqref{it:njoint} is immediate from that 
for an $\omega$-joint channel $\Theta_\infty^\prime \in \chset{\Ainf}{\LHin}$
and each $2 \leq n< \infty ,$
the map 
\[
	\Atn \ni X \longmapsto \Theta_\infty^\prime (X \otimes \unit_{\Ainf})
	\in \LHin
\]
is an $n$-joint channel of $\Lambda .$

\eqref{it:ijoints}$\implies$\eqref{it:rand2}.
Let $\Theta_\infty \in \chset{\Ainf}{\LHin}$
be a symmetric $\omega$-joint channel of $\Lambda .$
If $\Theta_\infty$ has a broadcasting channel 
$\Psi_\infty \in \chset{\Ainf \otimes \Ainf}{\Ainf} ,$
by Lemma~\ref{lemm:bc}
the normal extension $\bTh_\infty \in \chset{(\Ainf)^{\ast \ast}}{\LHin}$
of $\Theta_\infty$
is randomization-equivalent to a QC channel
and, from $\Lambda \cocp \Theta_\infty \cocp \bTh_\infty ,$
this implies \eqref{it:rand2}.
Thus it suffices to construct a broadcasting channel of $\Theta_\infty .$
For each $1 \leq n < \infty$
we define a representation 
\[
\rho_n \colon (\Atn \otimes \unit_{\Ainf}) \otimes (\Atn \otimes \unit_{\Ainf})
\to
\A^{\otimes (2n)} \otimes \unit_{\Ainf}
\]
by
\begin{align*}
	&\rho_n \left(
	(A_1 \otimes A_2 \otimes \cdots \otimes A_n \otimes \unit_{\Ainf})
	\otimes
	(B_1 \otimes B_2 \otimes \cdots \otimes B_n \otimes \unit_{\Ainf})
	\right)
	\\
	&:=
	A_1 \otimes B_1 \otimes A_2 \otimes B_2 \otimes
	\cdots 
	A_n \otimes B_n \otimes \unit_{\Ainf}
	\\
	&( A_1 , \dots , A_n , B_1 , \dots , B_n \in \A) .
\end{align*}
Let $\B_n :=  (\Atn \otimes \unit_{\Ainf}) \otimes (\Atn \otimes \unit_{\Ainf})$
and $\B_\infty := \bigcup_{1 \leq n < \infty} \B_n ,$
which are \cstar- and $\ast$-subalgebras of $\Ainf \otimes \Ainf, $
respectively.
Since $\rho_n \rvert_{\B_m} = \rho_m$
for $1 \leq m < n < \infty ,$
there exists a representation
$\trho_0 \colon \B_\infty \to \Ainf$
such that $\trho_0 \rvert_{\B_n} = \rho_n$
for all $1 \leq n < \infty .$
Now we show that $\B_\infty$ is norm dense in $\Ainf \otimes \Ainf .$
Take an arbitrary product element $X_1 \otimes X_2 \in \Ainf \otimes \Ainf$
$(X_1 , X_2 \in \Ainf) .$
Then there exist sequences $(X_{j,m})_{m \geq 1}$
$(j=1,2)$
in $\bigcup_{n \geq 1} \Atn \otimes \unit_{\Ainf}$
such that
$ \| X_{j,m} - X_j \| \to 0  $
and 
$\sup_{m \geq 1} \| X_{j,m} \| \leq \| X_j \| .$
Then $(X_{1,m} \otimes X_{2,m})_{m \geq 1}$
is a sequence in $\B_\infty$ and satisfies
\begin{align*}
	\| X_{1,m} \otimes X_{2,m} - X_{1} \otimes X_{2} \|
	&\leq 
	\|  (X_{1,m} - X_1) \otimes X_{2,m} \|
	+
	\| X_1 \otimes (X_{2,m} - X_2) \|
	\\
	&\leq
	\| X_{1,m} - X_1 \| \| X_2 \|
	+ 
	\| X_1 \| 
	\| X_{2,m} - X_2 \| 
	\\
	&\to 0 .
\end{align*}
Therefore $\B_\infty$ is norm dense in $\Ainf \otimes \Ainf .$
Hence $\trho_0$ uniquely extends to a representation
$\trho \colon \Ainf \otimes \Ainf \to \Ainf .$
Then by the symmetry of $\Theta_\infty ,$
for each $2 \leq n < \infty$
and each $X \in \Atn ,$
\[
	\Theta_{\infty} (X \otimes \unit_{\Ainf})
	=
	\Theta_\infty \circ \trho 
	\left(
	(X \otimes \unit_{\Ainf}) \otimes \unit_{\Ainf}
	\right)
	=
	\Theta_\infty \circ \trho 
	\left(
	 \unit_{\Ainf} \otimes  (X \otimes \unit_{\Ainf})
	\right) .
\]
Since $\bigcup_{n \geq 1} \Atn \otimes \unit_{\Ainf}$
is norm dense in $\Ainf ,$
this implies that $\trho$ is a broadcasting channel of $\Theta_\infty ,$
which proves \eqref{it:rand2}.

\eqref{it:eb2}$\iff$\eqref{it:eb3} follows from the equivalence
\eqref{it:eb2}$\iff$\eqref{it:rand2}
and from that $\Lambda \cocp \tG$ if and only if
$\bL \cocp \tG$ for any QC channel $\tG .$
\end{proof}
From the proof of \eqref{it:ijoints}$\implies$\eqref{it:rand2}
in Theorem~\ref{theo:main},
we obtain
\begin{corollary}\label{coro:sym}
Let $\A$ be a \cstar-algebra.
Then any symmetric channel
$\Theta \in \chset{\Ainf}{\LHin}$
has a broadcasting channel $\Psi \in \chset{\Ainf \otimes \Ainf}{\Ainf} .$
Furthermore the normal extension
$\bTh \in \nchset{(\Ainf)^{\ast \ast}}{\LHin}$
of $\Theta$ is randomization-equivalent to a QC channel.
\end{corollary}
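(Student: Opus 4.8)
The plan is to observe that the broadcasting channel constructed in the implication \eqref{it:ijoints}$\implies$\eqref{it:rand2} of Theorem~\ref{theo:main} was built using \emph{only} the symmetry of the $\omega$-joint channel $\Theta_\infty$, and never the fact that it was an $\omega$-joint channel of a given $\Lambda .$ I would therefore run that construction verbatim for the given symmetric channel $\Theta \in \chset{\Ainf}{\LHin} .$ Concretely, I first reintroduce the interleaving representations $\rho_n$ on $\B_n := (\Atn \otimes \unit_{\Ainf}) \otimes (\Atn \otimes \unit_{\Ainf}) ,$ which send a product over the two copies to the element of $\A^{\otimes(2n)} \otimes \unit_{\Ainf}$ obtained by interlacing the two factors. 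The consistency $\rho_n \rvert_{\B_m} = \rho_m$ $(1 \leq m < n)$ lets me assemble these into a representation $\trho_0$ on the $\ast$-subalgebra $\B_\infty := \bigcup_{n} \B_n ,$ and the norm density of $\B_\infty$ in $\Ainf \otimes \Ainf$ (shown in that same proof via the product-state approximation estimate) yields a unique extension to a representation $\trho \colon \Ainf \otimes \Ainf \to \Ainf ,$ which is in particular a unital CP map, i.e.\ a channel.

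Next I would verify the broadcasting identities directly from symmetry. For each $2 \leq n < \infty$ and $X \in \Atn ,$ the elements $\trho((X \otimes \unit_{\Ainf}) \otimes \unit_{\Ainf})$ and $\trho(\unit_{\Ainf} \otimes (X \otimes \unit_{\Ainf}))$ place the components of $X$ on the odd and on the even tensor slots, respectively; each is the image of $X \otimes \unit_{\Ainf}$ under $\pi_\sigma \otimes \id_{\Ainf}$ for a suitable $\sigma \in \mathfrak{S}_{2n} .$ The symmetry hypothesis $\Theta \circ (\pi_\sigma \otimes \id_{\Ainf}) = \Theta$ then gives $\Theta \circ \trho((X \otimes \unit_{\Ainf}) \otimes \unit_{\Ainf}) = \Theta(X \otimes \unit_{\Ainf}) = \Theta \circ \trho(\unit_{\Ainf} \otimes (X \otimes \unit_{\Ainf})) .$ Because $\bigcup_{n} \Atn \otimes \unit_{\Ainf}$ is norm dense in $\Ainf ,$ this shows that $\Psi := \trho$ is a broadcasting channel of $\Theta$ in the sense of the injective \cstar-tensor product, proving the first assertion.

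Finally, having produced $\Psi \in \chset{\Ainf \otimes \Ainf}{\Ainf} ,$ I would apply Lemma~\ref{lemm:bc} with $\A = \Ainf$ and $\Min = \LHin .$ The lemma shows that the normal extension $\bTh \in \nchset{(\Ainf)^{\ast\ast}}{\LHin}$ satisfies the equivalent conditions of Theorem~\ref{theo:nb}, in particular condition (iii): $\bTh$ is randomization-equivalent to a QC channel, which is the second assertion. Since the whole argument merely transcribes an already-verified construction, I do not expect a genuinely new obstacle; the only point needing care is the combinatorial bookkeeping that the odd/even placement of $X$ is realized by a permutation lying in $\mathfrak{S}_{2n}$ and is thus covered by the symmetry hypothesis, exactly as in the original implication.
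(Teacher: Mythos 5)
Your proposal is correct and is essentially identical to the paper's own proof: the paper derives Corollary~\ref{coro:sym} precisely by observing that the construction of the interleaving representation $\trho$ in the step \eqref{it:ijoints}$\implies$\eqref{it:rand2} of Theorem~\ref{theo:main} uses only the symmetry of the channel, and then invokes Lemma~\ref{lemm:bc} for the normal extension. Your extra care in checking that the odd/even placement is realized by a permutation in $\mathfrak{S}_{2n}$ (hence covered by the symmetry hypothesis) matches what the paper leaves implicit.
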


\begin{remark} \label{rem:refs}
Related results to the equivalence of the EB condition and \eqref{it:rand1} 
of Theorem~\ref{theo:main} were obtained in 
Ref.~\onlinecite	{STORMER20082303} (Corollary~3)
and
Ref.~\onlinecite{doi:10.1063/1.5024385} (Lemma~III.1).
\end{remark}

\section{Dedekind-closedness of EB channels} \label{sec:dedekind}
In this section we prove that the supremum or infimum
of any randomization-monotone net of normal EB channels
with a fixed input von Neumann algebra is also EB
(Theorem~\ref{theo:dc}).
Before going to Theorem~\ref{theo:dc}, 
we need to review some results from Ref.~\onlinecite{kuramochi2018directed}.

For a von Neumann algebra $\Min$
we denote by $\nchset{}{\Min}$
the class of normal channels with the input space $\Min$
and with arbitrary outcome von Neumann algebras.
Since the class of von Neumann algebras is a proper class,
so is $\nchset{}{\Min} .$

\begin{proposition}[Ref.~\onlinecite{kuramochi2018directed}, Section~3]
\label{prop:class}
Let $\Min$ be a von Neumann algebra.
Then there exist a set $\CH (\Min)$
and a class-to-set surjection
\begin{equation}
	\nchset{}{\Min} \ni \Lambda \longmapsto [\Lambda] \in \CH (\Min)
	\label{eq:csmap}
\end{equation}
such that for any $\Lambda , \Gamma \in \nchset{}{\Min} ,$
$\Lambda \eqcp \Gamma$ if and only if $[\Lambda] = [\Gamma] .$
\end{proposition}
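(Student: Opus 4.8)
The plan is to show that, although $\nchset{}{\Min}$ is a proper class, it carries only set-many $\eqcp$-classes, by attaching to each channel a complete invariant of bounded size. Working in the Schr\"odinger picture, a normal channel $\Lambda\in\nchset{\N}{\Min}$ gives the predual map $\Lambda^\ast\colon\Min_{\ast}\to\N_{\ast}$, $\phi\mapsto\phi\circ\Lambda$. I would associate to $\Lambda$ the closed subspace $X_\Lambda:=\cspan\,\Lambda^\ast(\Min_{\ast})\subseteq\N_{\ast}$, remembered together with the distinguished map $\Lambda^\ast\colon\Min_{\ast}\to X_\Lambda$ and with the \emph{matrix order} (operator-space structure) that $X_\Lambda$ inherits from $\N_{\ast}$. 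The crucial quantitative point is that $\Lambda^\ast$ is norm-continuous, so the density character of $X_\Lambda$ is at most that of $\Min_{\ast}$, a cardinal $\kappa$ depending only on $\Min$.

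Next I would verify that $\eqcp$-equivalent channels have isomorphic invariants. If $\Lambda=\Gamma\circ\alpha$ and $\Gamma=\Lambda\circ\beta$ with normal channels $\alpha,\beta$ (these may be taken normal because the input algebra $\Min$ is a von Neumann algebra), then $\Lambda^\ast=\alpha^\ast\circ\Gamma^\ast$ and $\Gamma^\ast=\beta^\ast\circ\Lambda^\ast$; restricting $\alpha^\ast,\beta^\ast$ to the subspaces $X_\Gamma,X_\Lambda$ yields mutually inverse completely positive contractions intertwining $\Gamma^\ast$ and $\Lambda^\ast$, hence complete order isomorphisms between the two invariants. Since, up to isomorphism, a matrix-ordered Banach space of density character $\le\kappa$ carrying a fixed positive map out of $\Min_{\ast}$ can be modelled on one fixed underlying set, the resulting isomorphism classes form a genuine set; taking $\CH(\Min)$ to be the set of those classes that arise from some channel and sending $\Lambda$ to the class of its invariant gives a class-to-set map that is surjective by construction.

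The heart of the matter, and the step I expect to be hardest, is the converse: an isomorphism of invariants must force $\Lambda\eqcp\Gamma$, which gives the nontrivial ``only if'' in the statement and is what bounds the number of classes by the number of invariants. This is a noncommutative counterpart of Le Cam's theorem that a statistical experiment is determined up to equivalence by its $L$-space together with its output states. I would prove it by reconstructing randomizing channels from a complete order isomorphism $u\colon X_\Gamma\to X_\Lambda$ with $u\circ\Gamma^\ast=\Lambda^\ast$: the task is to promote $u$ to a normal unital completely positive map witnessing $\Lambda\cocp\Gamma$, and symmetrically for $u^{-1}$. The difficulty is passing from a morphism defined only on the reduced predual subspaces back to a genuine channel between the full algebras $\N$ and $\N'$; it is precisely here that the matrix order, rather than mere positivity, is indispensable, and where I would rely on the universal enveloping algebra and the normal-extension construction of the Preliminaries to manufacture the required normal randomizations. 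Once this converse is established the cardinality bound shows there are set-many classes, so that, as a clean alternative endgame, one may instead define $[\Lambda]$ by selecting a representative of minimal von Neumann rank in each class (Scott's trick), which makes the equivalence ``$[\Lambda]=[\Gamma]\iff\Lambda\eqcp\Gamma$'' immediate.
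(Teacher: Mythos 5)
Your preparatory steps are fine: $X_\Lambda$ has density character at most that of $\Min_{\ast}$, and randomization\--equivalent channels do induce mutually inverse completely positive isometries between their invariants intertwining the predual maps (your computation $\alpha^\ast\beta^\ast=\mathrm{id}$ on $X_\Lambda$ is correct). But the proposal founders on exactly the step you defer as ``the heart of the matter,'' and the problem is not just that it is unproven: for the invariant you actually define (the closed span with its \emph{matrix order}, together with $\Lambda^\ast$), the converse is \emph{false}, so no argument can close the gap. Take $\Min=\cmplx^2$, let $\rho_1=\ket{0}\bra{0}$, $\rho_2=\ket{+}\bra{+}$ in $\mathbb{M}_2$ and $\mu_1=\tfrac12(e_1+e_2)$, $\mu_2=\tfrac12(e_2+e_3)$ in $\cmplx^3$, and consider $\Lambda\in\nchset{\mathbb{M}_2}{\cmplx^2}$, $\Lambda(A)=(\tr[\rho_1A],\tr[\rho_2A])$, and $\Gamma\in\nchset{\cmplx^3}{\cmplx^2}$, $\Gamma(f)=(\tfrac12(f_1+f_2),\tfrac12(f_2+f_3))$. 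Compressing the second factor by $\ket{1}$ and by $\ket{-}$ shows that $C_1\otimes\rho_1+C_2\otimes\rho_2\geq 0$ in $\mathbb{M}_n\otimes\mathbb{M}_2$ iff $C_1\geq0$ and $C_2\geq0$, and evaluating the three components shows the identical criterion for $C_1\otimes\mu_1+C_2\otimes\mu_2$; hence $\rho_i\mapsto\mu_i$ extends to a complete order isomorphism $X_\Lambda\to X_\Gamma$ intertwining $\Lambda^\ast$ and $\Gamma^\ast$ (the transpose ambiguity in the predual matrix order is harmless here since $C\geq0$ iff $C^{T}\geq0$). Yet $\Lambda\not\eqcp\Gamma$: a channel $\beta\colon\cmplx^3\to\mathbb{M}_2$ with $\Gamma=\Lambda\circ\beta$ is a POVM $(E_k)$ with $\tr[\rho_iE_k]=\mu_i(k)$, which forces $E_1=\ket{-}\bra{-}$, $E_3=\ket{1}\bra{1}$, and then $E_2=\unit-E_1-E_3$ has the negative eigenvalue $-1/\sqrt2$; while a channel $\alpha\colon\mathbb{M}_2\to\cmplx^3$ with $\Lambda=\Gamma\circ\alpha$ would exhibit the pure states as mixtures $\rho_1=\tfrac12(\omega_1+\omega_2)$, $\rho_2=\tfrac12(\omega_2+\omega_3)$, forcing $\rho_1=\omega_2=\rho_2$. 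The structural reason your plan cannot work is dual to your own construction: $X_\Lambda^{\perp}=\ker\Lambda$, so an isomorphism of invariants is an identification of the quotients $\N/\ker\Lambda\cong\N'/\ker\Gamma$, and manufacturing a randomization from it requires \emph{lifting} a CP map through a quotient by a weak-$\ast$ closed subspace that is not an ideal; Arveson-type extension handles subspaces, not quotients, and the normal extension of the Preliminaries is irrelevant to this obstruction. (If you strengthen the invariant to include the matrix \emph{norms}, this example is excluded, since $\|\rho_1-\rho_2\|_1=\sqrt2\neq1=\|\mu_1-\mu_2\|_1$; but then completeness of the invariant is precisely a quantum Le Cam randomization criterion, which you do not prove and which is known to be delicate, and false in closely related formulations, in the quantum setting.)

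The proof the paper relies on (the cited Ref.~\onlinecite{kuramochi2018directed}, Section~3) avoids any purely linear invariant and goes through minimal sufficiency, the theory recalled in Section~V of this paper: every normal channel with input $\Min$ is randomization-equivalent to a minimal sufficient one; two equivalent minimal sufficient channels are related by a normal $\ast$-isomorphism of their outcome algebras intertwining the channels; and the outcome algebra of such a canonical representative can be faithfully realized on a Hilbert space whose dimension is bounded by a cardinal depending only on $\Min$. Von Neumann algebras on a fixed Hilbert space and normal channels from them to $\Min$ form a set, so $\CH(\Min)$ can be taken to be the honest quotient of that set, and $[\Lambda]$ is the class of an equivalent small representative. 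The moral difference is that the complete invariant there retains the multiplicative (von Neumann algebra) structure of a canonical representative, which is exactly what your matrix-ordered predual span forgets --- and, as the example shows, forgets fatally. Finally, note that Scott's trick cannot rescue the endgame on its own: it makes each equivalence class a set, but the \emph{collection} of classes is a set only after one knows there are set-many classes, i.e.\ only after the cardinality bound that your missing converse was supposed to supply; as you yourself observe, everything hinges on that step, and in the form stated it is not repairable.
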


For each von Neumann algebra $\Min$ we fix such a set
$\CH (\Min)$ and a map \eqref{eq:csmap}.
We call $\CH (\Min)$ the set of randomization-equivalence classes of normal channels.
We define a partial order $\cocp $ on $\CH (\Min)$ by
$[\Lambda] \cocp [\Gamma] $
$:\defarrow$
$\Lambda \cocp \Gamma $
$([\Lambda ] , [\Gamma] \in \CH (\Min)) .$
We also define 
\begin{gather*}
	\CHqc (\Min)
	:=
	\set{ [\Lambda] \in \CH (\Min) | 
	\text{the outcome algebra of $\Lambda$ is commutative}} ,
	\\
	\CHeb (\Min ):=
	\set{ [\Lambda] \in \CH (\Min) | 
	\text{$\Lambda$ is EB}
	 },
\end{gather*}
which are the sets of equivalence classes of QC and EB normal channels,
respectively.

Let $(X , \leq )$ be a partially ordered set (poset).
We adopt the following terminology as in Ref.~\onlinecite{kuramochi2018directed}
\begin{itemize}
\item
A net $(x_i)_{i \in I}$ on $X$ is called increasing
(respectively, decreasing)
if for $i,j \in I ,$
$i \leq j$ implies $x_i \leq x_j$
(respectively, $x_j \leq x_i$).
\item
$X$ is called an upper (respectively, lower) 
directed-complete partially ordered set (dcpo)
if any increasing (respectively, decreasing) net $(x_i)_{i \in I}$
on $X$ has a supremum $\sup_{i \in I} x_i \in X$
(respectively, infimum $\inf_{i \in I} x_i \in X $).
\item
A subset $A \subset X$ is said to be an upper (respectively, lower) 
Dedekind-closed subset of $X$
if whenever an increasing (respectively, decreasing)
net $(x_i)_{i \in I}$ on $A$ 
has a supremum $\sup_{i \in } x_i \in X $
(respectively, infimum $\inf_{i \in I }x_i \in X$),
then $\sup_{i\in I} x_i \in A$
(respectively, $\inf_{i \in I}x_i \in A$).
\end{itemize}

The following theorem is the \lq\lq{}channel part\rq\rq{}
of the main results of Ref.~\onlinecite{kuramochi2018directed}.
\begin{theorem}[Ref.~\onlinecite{kuramochi2018directed}, Theorem~3]
\label{theo:prev}
Let $\Min$ be a von Neumann algebra. 
\begin{enumerate}[(i)]
\item
$\CH (\Min)$ is an upper and lower dcpo.
\item
$\CHqc (\Min)$ is an upper and lower Dedekind-closed subset of $\CH (\Min) .$
\end{enumerate}
\end{theorem}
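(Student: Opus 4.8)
The plan is to prove both parts by realizing the required directed suprema and filtered infima as \emph{universal limit objects} in the class $\nchset{}{\Min}$, and then, for (ii), observing that these limits preserve commutativity of the outcome algebra. Recall that $[\Lambda] \cocp [\Gamma]$ says that $\Gamma$ is \emph{more informative} than $\Lambda$; note in passing that the identity channel $\id_{\Min}$ is a largest element of $\CH (\Min)$ (since $\Lambda = \id_{\Min} \circ \Lambda$ for every $\Lambda$) and any trivial channel $\cmplx \to \Min$ is a smallest element, so $\CH (\Min)$ at least has top and bottom. For (i) I would build the supremum of an increasing net as a \cstar-inductive limit followed by a normal extension, and the infimum of a decreasing net as the dual projective-limit construction.

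For the upper dcpo, take an increasing net with representatives $\Lambda_i \in \nchset{\A_i}{\Min}$; for $i \leq j$ choose normal channels $\alpha_{ji} \in \nchset{\A_i}{\A_j}$ with $\Lambda_i = \Lambda_j \circ \alpha_{ji}$. These data nearly form an inductive system, and I would take the \cstar-inductive limit $\A_\infty$ of $(\A_i , \alpha_{ji})$, define the induced channel $\Lambda_\infty^0 \colon \A_\infty \to \Min$ on the dense union, and pass to its normal extension $\overline{\Lambda_\infty^0} \in \nchset{\A_\infty^{\ast\ast}}{\Min}$. The least-normal-upper-bound property of the normal extension recorded in the preliminaries does the decisive work: since each $\Lambda_i \cocp \Lambda_\infty^0 \cocp \overline{\Lambda_\infty^0}$, it is an upper bound, and if $\Gamma$ is any normal channel with $\Lambda_i \cocp \Gamma$ for all $i$, the universal property of the inductive limit produces a factorization $\Lambda_\infty^0 \cocp \Gamma$, whence $\overline{\Lambda_\infty^0} \cocp \Gamma$ because $\overline{\Lambda_\infty^0}$ is the least normal channel above $\Lambda_\infty^0$. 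Thus $[\overline{\Lambda_\infty^0}] = \sup_i [\Lambda_i]$. The \textbf{main obstacle} is exactly that the connecting channels $\alpha_{ji}$, and likewise the factorizations through a candidate bound $\Gamma$, are \emph{not canonical}, so the cocycle/compatibility identities $\alpha_{kj} \circ \alpha_{ji} = \alpha_{ki}$ need not hold for arbitrary choices; I would resolve this by reducing to a canonical (minimal) representative of each class, for which the randomizing morphisms can be chosen compatibly, so that a genuine inductive system and coherent factorizations exist.

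For the lower dcpo the roles reverse: a decreasing net gives, for $i \leq j$, randomizing channels $\delta_{ji} \in \nchset{\A_j}{\A_i}$ with $\Lambda_j = \Lambda_i \circ \delta_{ji}$, i.e.\ a projective system. The infimum is the largest common coarse-graining, which I would construct as the projective limit of $(\A_j , \delta_{ji})$ with its induced limit channel, again passing to a normal extension; the universal property of the projective limit yields that this channel lies below every $\Lambda_i$ and dominates every common lower bound, so it represents $\inf_i [\Lambda_i]$. I expect this case to be somewhat more delicate, since projective limits of von Neumann algebras require taking the appropriate weak closure inside the product representation, but the argument is structurally dual to the inductive case, and the same minimality device handles the coherence of the $\delta_{ji}$.

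For part (ii), I would run the identical constructions starting from commutative representatives, which exist for every QC class. Since the \cstar-inductive limit of commutative algebras is commutative and the projective limit of commutative algebras is commutative, the outcome algebra of the limit channel is commutative, so its class lies in $\CHqc (\Min)$; as part (i) already guarantees the existence of the sup and inf, this is precisely upper and lower Dedekind-closedness of $\CHqc (\Min)$. A more robust alternative, avoiding any reliance on commutativity surviving the limit at the algebra level, is to invoke the broadcasting characterization: by Theorem~\ref{theo:nb}, $[\Lambda] \in \CHqc (\Min)$ iff $\Lambda$ is broadcastable, and one then assembles a broadcasting channel for the limit from those of the net using the same limit and coherence machinery. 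Either route shows that whenever a monotone net in $\CHqc (\Min)$ admits a supremum or infimum in $\CH (\Min)$, that bound is again QC.
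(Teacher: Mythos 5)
First, a point of reference: the paper does not prove Theorem~\ref{theo:prev} at all --- it is imported verbatim from Ref.~\onlinecite{kuramochi2018directed} (Theorem~3). The shape of the actual argument is nevertheless visible in how this paper uses it in the proof of Theorem~\ref{theo:dc}: one first reduces arbitrary monotone nets to \emph{transfinite monotone sequences} (Lemma~4 of the reference), and then, by transfinite recursion (Lemma~7 of the reference), replaces the sequence by randomization-equivalent representatives $\tL_\alpha$ that are \emph{restrictions of a single channel} $\Lambda \in \chset{\A}{\Min}$ to an increasing family of unital \cstar-subalgebras $\A_\alpha \subset \A$ with $\bigcup_\alpha \A_\alpha$ norm dense; the supremum is then the normal extension $\bL ,$ identified via the least-normal-upper-bound property. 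Your endgame (dense union, normal extension, least-normal-upper-bound) matches this, but your route to the limit object has two genuine gaps.

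The first gap is structural: a \cstar-inductive limit requires the connecting maps to be $\ast$-homomorphisms, whereas your $\alpha_{ji}$ are merely unital CP channels. An ``inductive limit'' of \cstar-algebras along ucp maps is not a \cstar-algebra (multiplicativity is lost; at best one obtains an operator system), and the dual ``projective limit of von Neumann algebras along channels'' that you invoke for infima has the same defect. So the central objects of your construction do not exist as stated. The whole point of the recursion in the reference is to convert the order relations $\Lambda_\alpha \cocp \Lambda_{\alpha+1}$ into honest \emph{inclusions} $\A_\alpha \subset \A_{\alpha+1} ,$ for which the limit is just the closed union --- a step your proposal does not supply. The second gap is the coherence problem you yourself flag but do not actually repair: passing to minimal sufficient representatives makes each \emph{class} canonical, but it does not make the factorizations canonical. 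Minimal sufficiency only gives $\Gamma \circ \Phi = \Gamma \Rightarrow \Phi = \id$ for endomorphisms of the outcome algebra; it does not imply that $\Gamma \circ \alpha = \Gamma \circ \alpha'$ forces $\alpha = \alpha' ,$ so neither the cocycle identities $\alpha_{kj} \circ \alpha_{ji} = \alpha_{ki}$ over an arbitrary directed index set nor the compatibility of the factorizations $\Lambda_i = \Gamma \circ \beta_i$ through a candidate bound $\Gamma$ can be arranged this way --- and without the latter your appeal to a ``universal property'' has nothing to act on. (The reference's reduction to well-ordered chains exists precisely to tame this choice problem, and the comparison with an arbitrary normal upper bound is done by the $\sigma$-weak compactness/subnet device used repeatedly in this paper, e.g.\ in the proofs of Theorems~\ref{theo:main} and~\ref{theo:dc}, not by a categorical universal property.) For part (ii), your remark that unions of commutative subalgebras stay commutative is correct as far as it goes, but since the Lemma~7-type recursion produces equivalent representatives whose outcome algebras need not inherit commutativity from the $\Lambda_\alpha ,$ the substance of (ii) is showing that commutative representatives can be propagated through the construction (or assembling broadcasting channels for the limit by a compactness argument); both of your suggested routes assert this rather than prove it.
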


Now we are in a position to prove the following theorem.

\begin{theorem}\label{theo:dc}
Let $\Min$ be a von Neumann algebra.
Then $\CHeb (\Min)$ is an upper and lower Dedekind-closed subset 
of $\CH (\Min) .$
\end{theorem}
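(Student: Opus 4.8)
The plan is to leverage the characterizations in Theorem~\ref{theo:main}, especially the equivalence \eqref{it:eb1}$\iff$\eqref{it:rand2}, which says that a normal channel $\Lambda \in \nchset{\M}{\Min}$ is EB if and only if $\Lambda \cocp \tG$ for some QC channel $\tG$. Thus in the language of randomization-equivalence classes, $\CHeb (\Min)$ consists exactly of those $[\Lambda]$ that lie below some element of $\CHqc (\Min)$: that is, $[\Lambda] \in \CHeb (\Min)$ iff there is $[\tG] \in \CHqc (\Min)$ with $[\Lambda] \cocp [\tG]$. I would first record this reformulation explicitly, since it converts the EB condition into a purely order-theoretic statement about the poset $\CH (\Min)$ and the known Dedekind-closed subset $\CHqc (\Min)$.

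Next I would prove the upper Dedekind-closedness. Let $([\Lambda_i])_{i \in I}$ be an increasing net in $\CHeb (\Min)$ whose supremum $[\Lambda] = \sup_{i \in I} [\Lambda_i]$ exists in $\CH (\Min)$ (such a supremum exists by Theorem~\ref{theo:prev}(i)); I must show $[\Lambda] \in \CHeb (\Min)$. The key idea is to pass to $\omega$-joint channels and use condition \eqref{it:ijoints} of Theorem~\ref{theo:main}. Concretely, by \eqref{it:eb1}$\iff$\eqref{it:ijoints} each $\Lambda_i$ has a symmetric $\omega$-joint channel $\Theta_\infty^{(i)} \in \chset{\A^{\otimes \infty}}{\Min}$ (after realizing $\Min \subset \LHin$). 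The plan is to show that the supremum of these joint channels furnishes a symmetric $\omega$-joint channel for $\Lambda$, so that $\Lambda$ satisfies \eqref{it:ijoints} and is therefore EB. To make this work cleanly, I would instead argue directly through the QC-upper-bound description: since each $\Lambda_i$ is EB, $[\Lambda_i] \cocp [\tG_i]$ for a QC class $[\tG_i]$; one wants a \emph{single} QC class dominating the whole net. The natural candidate is $[\tG] := \sup_{i} [\tG_i]$, but monotonicity of the net $([\Lambda_i])$ does not automatically give a monotone net of QC upper bounds, which is the crux of the difficulty.

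To resolve this I would route the argument through self-compatibility rather than directly through QC bounds. Fix an input realization $\Min \subset \LHin$. For the increasing net with supremum $[\Lambda]$, choose normal representatives and use that $\cocp$-suprema are computed via a directed-complete structure in which marginals behave well (Theorem~\ref{theo:prev}(i)); then show that $[\Lambda]$ has an $n$-joint channel for every $2 \le n < \infty$, i.e.\ condition \eqref{it:njoint}. The point is that $n$-joint compatibility is a \emph{finitary} condition: it concerns only finitely many tensor factors and is preserved under the order-theoretic suprema because the $n$-joint channels $\Theta_n^{(i)}$ of the $\Lambda_i$ form an increasing net (in a suitable $\cocp$ sense) whose supremum $\Theta_n := \sup_i \Theta_n^{(i)}$ is an $n$-joint channel of $\Lambda$; verifying that the marginals of the supremum equal the supremum of the marginals is where Theorem~\ref{theo:prev}(i) and the explicit construction of suprema from Ref.~\onlinecite{kuramochi2018directed} are invoked. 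Once \eqref{it:njoint} holds for all $n$, Theorem~\ref{theo:main} gives $[\Lambda] \in \CHeb (\Min)$, establishing upper Dedekind-closedness.

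For the lower Dedekind-closedness, let $([\Lambda_i])_{i \in I}$ be a decreasing net in $\CHeb (\Min)$ with infimum $[\Lambda] = \inf_i [\Lambda_i]$ in $\CH (\Min)$. Here the argument is considerably easier: the infimum is a \emph{lower} bound, so $[\Lambda] \cocp [\Lambda_i]$ for every $i$, and fixing any single $i_0$ together with a QC upper bound $[\Lambda_{i_0}] \cocp [\tG_{i_0}]$ yields $[\Lambda] \cocp [\Lambda_{i_0}] \cocp [\tG_{i_0}]$ by transitivity of $\cocp$. Thus $[\Lambda]$ is dominated by a QC class and is EB by \eqref{it:eb1}$\iff$\eqref{it:rand2}. \textbf{The main obstacle} is therefore entirely in the upper case: showing that the finitary $n$-joint compatibility passes to the supremum, which requires checking that taking $\cocp$-suprema commutes with the marginal (restriction) operations on joint channels. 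I would isolate this as the one technical lemma to verify carefully, using the concrete description of suprema in $\CH (\Min)$ from Theorem~\ref{theo:prev} and the fact that each marginal map $\Theta \mapsto \Theta(\,\cdot \otimes \unit_{\A^{\otimes(n-1)}})$ is itself a randomization and hence order-preserving.
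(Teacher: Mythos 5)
Your reformulation of the EB condition as ``dominated by a QC class,'' your argument for lower Dedekind-closedness (fix one $i_0$ and use transitivity), and the identification of $\CHeb (\Min)$ with a lower subset of $\CHeb (\LHin)$ all match the paper and are correct. The gap is in the upper case, at precisely the step you flag as the crux but never actually supply. First, there is no justification that the $n$-joint channels $\Theta_n^{(i)}$ of an increasing net $(\Lambda_i)_{i \in I}$ can be chosen to form an increasing net in $\cocp$: if $\Lambda_i = \Lambda_j \circ \alpha_{ij},$ then $\Theta_n^{(j)} \circ \alpha_{ij}^{\otimes n}$ is a joint channel of $\Lambda_i$ dominated by $\Theta_n^{(j)},$ but for a general net these choices must be coherent across all pairs $i \leq j \leq k$ simultaneously, and the connecting channels $\alpha_{ij}$ satisfy no cocycle condition. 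Second, and more fundamentally, even granted such an increasing net with supremum $[\Theta_n] := \sup_{i} [\Theta_n^{(i)}],$ the phrase ``the marginals of the supremum'' is not well defined: $[\Theta_n]$ is an abstract randomization-equivalence class, and nothing equips a representative of it with an $n$-fold tensor-product structure on its outcome algebra with respect to which marginals could be taken. Your observation that each marginal map is order-preserving only yields $[\Lambda] = \sup_i [\Lambda_i] \cocp [\Theta_n],$ which is far weaker than $[\Theta_n]$ being an $n$-joint channel of a representative of $[\Lambda] .$ So ``suprema commute with marginals'' is not a lemma available to be invoked; it is the entire difficulty, and it does not hold in any formulation you have given.

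The paper circumvents this by never taking suprema of joint channels at all. It first reduces (Lemma~4 of Ref.~\onlinecite{kuramochi2018directed}) to increasing \emph{transfinite sequences} $([\Lambda_\alpha])_{\alpha < \alpha_0},$ a reduction your outline also omits, and then invokes the concrete realization of the supremum from the proof of Lemma~7 of that reference: the supremum is $[\bL]$ for the normal extension $\bL$ of a single channel $\Lambda \in \chset{\A}{\LHin}$ on an inductive-limit \cstar-algebra $\A ,$ and each $\Lambda_\alpha$ is equivalent to the restriction $\tL_\alpha$ of $\Lambda$ to a subalgebra $\A_\alpha \subset \A$ with $\bigcup_{\alpha<\alpha_0} \A_\alpha$ norm dense. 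Now the tensor structure is present from the start: the symmetric joint channels $\Theta_{n,\alpha}$ of the $\tL_\alpha$ are defined on subalgebras $\A_\alpha^{\otimes n}$ of the single algebra $\Atn ,$ one extracts a pointwise $\sigma$-weak cluster point along a subnet (Tychonoff) on the dense union $\bigcup_{\alpha<\alpha_0} \A_\alpha^{\otimes n},$ extends it to a symmetric channel on $\Atn ,$ and verifies directly that it is an $n$-joint channel of $\Lambda .$ Then Theorem~\ref{theo:main} gives that $\Lambda ,$ hence $\bL ,$ is EB. This compactness argument over a concrete realization of the supremum is the missing ingredient in your proposal; the purely order-theoretic commutation you hoped for cannot replace it.
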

\begin{proof}
Suppose that $\Min$ acts on a Hilbert space $\Hin .$
We first show the claim when $\Min = \LHin .$

By Theorem~\ref{theo:main}, we can write
\[
	\CHeb (\LHin)
	=
	\set{[\Lambda] \in \CH (\LHin)| 
	\exists [\Gamma]\in \CHqc (\LHin) \text{ s.t. } 
	[\Lambda] \cocp [\Gamma]
	} ,
\]
from which the lower Dedekind-closedness of $\CHeb (\LHin)$ immediately follows.
Now we show the upper Dedekind-closedness of $\CHeb (\LHin) .$
According to Lemma~4 of Ref.~\onlinecite{kuramochi2018directed},
we have only to prove
$\sup_{\alpha< \alpha_0} [\Lambda_\alpha] \in \CHeb (\LHin)$
for any increasing transfinite sequence $([\Lambda_\alpha])_{\alpha <\alpha_0}$
in $\CHeb (\LHin) $
indexed by ordinals $\alpha$ 
smaller than $\alpha_0$ and greater than or equal to $0 .$
From the proof of Lemma~7 of Ref.~\onlinecite{kuramochi2018directed}
we can take channels $\tL_\alpha \in \chset{\A_\alpha}{\LHin}$
$(\alpha < \alpha_0)$
and 
$\Lambda \in \chset{\A}{\LHin}$ such that
\begin{itemize}
\item
$\Lambda_\alpha \eqcp \tL_\alpha $ 
$(\alpha < \alpha_0) ;$
\item
$(\A_\alpha)_{\alpha < \alpha_0}$ is an increasing transfinite sequence of 
unital \cstar-subalgebras of $\A$
such that $\bigcup_{\alpha < \alpha_0} \A_\alpha$
is a norm dense $\ast$-subalgebra of $\A ;$
\item
$\tL_\alpha$ is the restriction of $\Lambda$ to $\A_\alpha$
$(\alpha < \alpha_0) ;$
\item
if we denote by $\bL \in \nchset{\A^{\ast \ast}}{\LHin}$ the normal extension of 
$\Lambda, $
we have $[\bL] = \sup_{\alpha < \alpha_0} [\Lambda_\alpha ] .$
\end{itemize}
Therefore by 
the equivalence \eqref{it:eb2}$\iff$\eqref{it:eb3} of Theorem~\ref{theo:main},
it suffices to show that $\Lambda$ is EB.
For each $\alpha < \alpha_0 ,$
since $\Lambda_\alpha$ is EB, 
so is $\tL_\alpha .$
Hence there exists a symmetric $n$-joint channel
$\Theta_{n ,\alpha} \in \chset{\Atn_\alpha}{\LHin}$
of $\tL_\alpha$
for each $2 \leq n < \infty .$
We define a $\ast$-subalgebra $\tA_n \subset \Atn$
and a map $\Xi_{n, \alpha} \colon \tA_n \to \LHin$
by
\begin{gather*}
	\tA_n := \bigcup_{\alpha < \alpha_0} \Atn_\alpha ,
	\\
	\Xi_{n,\alpha}(X)
	:=
	\begin{cases}
	\Theta_{n,\alpha} (X) 
	& \text{if $X \in \Atn_\alpha;$}
	\\
	0 & \text{otherwise.}
	\end{cases}
\end{gather*}
By Tychonoff's theorem, 
there exists a subnet $(\Xi_{n , \alpha (i)})_{i \in I}$
such that the limit 
$\Theta^\prime_n (X) := \swlim_{i \in I} \Xi_{n, \alpha (i)} (X)$
exists for all $X \in \tA_n .$
Similarly as in the proofs of 
\eqref{it:njoint}$\implies$\eqref{it:ijoints}
and
\eqref{it:ijoints}$\implies$\eqref{it:rand2}
of Theorem~\ref{theo:main},
we can show that $\tA_n$ is norm dense in $\Atn$
and $\Theta_n^\prime$ uniquely extends to a 
symmetric channel $\Theta_n \in \chset{\Atn}{\LHin} .$
Since
\[
	\Theta_n (A \otimes \unit_{\A^{\otimes (n-1)}})
	=\swlim_{i \in I , \alpha(i) \geq \alpha}
	\Theta_{n, \alpha (i)}
	(A \otimes \unit_{\A^{\otimes (n-1)}_{\alpha(i)}})
	=
	\Lambda(A)
\]
for each $\alpha <\alpha_0$ and each
$A \in \A_\alpha ,$
the norm denseness of
$\bigcup_{\alpha < \alpha_0} \A_\alpha$
in $\A$
implies
$\Theta_n (A \otimes \unit_{\A^{\otimes (n-1)}}) = \Lambda (A)$
for all $A\in \A .$
Thus by the symmetry of  $\Theta_n, $
$\Theta_n$ is an $n$-joint channel of $\Lambda .$
Therefore by Theorem~\ref{theo:main}, $\Lambda$ is EB,
which proves the upper Dedekind-closedness of $\CHeb (\LHin) .$

Finally we consider general $\Min .$
We can regard the set $\CH (\Min)$ as the lower subset
\[
	\set{[\Lambda] \in \CH (\LHin) | [\Lambda] \cocp [\id_{\Min}]}
\]
of $\CH (\LHin) .$
Furthermore, by the equivalence \eqref{it:eb1}$\iff$\eqref{it:eb2}
of Theorem~\ref{theo:main},
$\CHeb (\Min)$
can be identified with the lower subset
\[
	\set{[\Lambda] \in \CHeb (\LHin) | [\Lambda] \cocp [\id_{\Min}]}
\]
of $\CHeb (\LHin) \subset \CH (\LHin) .$
Therefore the claim for general $\Min$ follows from that for $\LHin .$
\end{proof}

\section{Example} \label{sec:ex}
In this section, we construct injective normal EB channels 
with arbitrary outcome von Neumann algebras on an infinite-dimensional separable Hilbert space.
We first remark that if we omit the injectivity requirement, 
we have a trivial example of such a channel:
for each von Neumann algebra $\M$ acting on a Hilbert space $\cH $
and a fixed normal state $\vph$ on $\M ,$
the channel
\[
	\M \ni A \mapsto \vph (A) \unit_{\cH} \in  \LH
\]
is an EB channel with the outcome algebra $\M .$
Indeed, the above channel is trivial in the sense
that it is minimal in the randomization order $\cocp $
among $\nchset{}{\LH}.$

In what follows in this section,
we fix an infinite-dimensional separable Hilbert space $\cH $
and an orthonormal basis $(x_n)_{n \in \natz}$ of $\cH ,$
where $\natz$ is the set of natural numbers containing $0 .$
For each $\alpha \in \cmplx$ we define the coherent state vector~\cite{PhysRev.131.2766} by
\[
	\psi_\alpha
	:=
	e^{- |\alpha|^2 /2} \sum_{n \in \natz} \frac{\alpha^n}{\sqrt{n!}} x_n .
\]
The coherent state vectors satisfy the overcompleteness relation~\cite{PhysRev.138.B274}
\[
	\pi^{-1} \int_\cmplx \ket{\psi_\alpha} \bra{\psi_\alpha} d^2 \alpha 
	= \unit_{\cH} ,
\]
where 
$d^2 \alpha = d \Real (\alpha) d \Imag (\alpha) $
is the $2$-dimensional Lebesgue measure and
the integral is in the weak sense.
We denote by $L^p (\cmplx)$ the $L^p$ space of $\cmplx$
with respect to $d^2 \alpha .$
We define normal channels 
$\gbarg \in \nchset{L^\infty(\cmplx)}{\LH} ,$
$\Psi \in \nchset{\LH}{L^\infty(\cmplx)},$
and $\Lambda \colon \nchset{\LH}{\LH}$
by
\begin{gather*}
	\gbarg (f)
	:=
	\pi^{-1} \int_\cmplx f (\alpha) \ket{\psi_\alpha} \bra{\psi_\alpha} d^2 \alpha
	\quad (f \in L^\infty(\cmplx)) , 
	\\
	\Psi (A) (\alpha) := \braket{\psi_\alpha | A \psi_\alpha}
	\quad
	(A \in \LH , \alpha \in \cmplx) ,
	\\
	\Lambda := \gbarg \circ \Psi .
\end{gather*}
The QC channel $\gbarg$ corresponds to the POVM
$\pi^{-1} \ket{\psi_\alpha} \bra{\psi_{\alpha}} d^2 \alpha$
on $\cmplx$ called the Bargmann measure.~\cite{bargmann1961,klauder1968fundamentals}
For each von Neumann algebra $\M$ acting on $\cH ,$
we also define $\Lambda_\M \in \nchset{\M}{\LH} $
as the restriction of $\Lambda$ to $\M .$
Then we immediately have 
$\Lambda_\M \cocp \Lambda \cocp \gbarg $
and hence $\Lambda_\M$ is EB.
Now we have
\begin{proposition}\label{prop:ex}
$\Lambda_\M$ is an injective EB channel for any von Neumann algebra $\M$ acting on $\cH .$
\end{proposition}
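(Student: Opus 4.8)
The entanglement-breaking property is already recorded in the discussion preceding the statement (from $\Lambda_\M \cocp \gbarg$ together with Theorem~\ref{theo:main}, as $\gbarg$ has the commutative outcome algebra $L^\infty(\cmplx)$), so the only thing left to establish is injectivity of $\Lambda_\M$. Since $\M \subseteq \LH$ and $\Lambda_\M$ is the restriction of $\Lambda = \gbarg \circ \Psi$ to $\M$, it suffices to show that $\Lambda \colon \LH \to \LH$ is injective as a linear map, and this follows once I show that each of the two factors $\Psi$ and $\gbarg$ is injective.

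For the injectivity of $\Psi$ I would use the explicit coherent-state expansion. Writing $A_{mn} := \braket{x_m | A x_n}$, one has
\[
	\Psi (A)(\alpha)
	=
	\braket{\psi_\alpha | A \psi_\alpha}
	=
	e^{-|\alpha|^2}
	\sum_{m,n \in \natz}
	\frac{\bar\alpha^m \alpha^n}{\sqrt{m!\,n!}}
	A_{mn} ,
\]
the double series converging absolutely since $|A_{mn}| \leq \| A \|$. Because $\alpha \mapsto \psi_\alpha$ is norm-continuous, $\Psi(A)$ is a continuous function on $\cmplx$, so $\Psi(A) = 0$ in $L^\infty (\cmplx)$ forces $\Psi(A)(\alpha) = 0$ for every $\alpha$. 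Then the function $\alpha \mapsto e^{|\alpha|^2}\Psi(A)(\alpha) = \sum_{m,n} \frac{\bar\alpha^m \alpha^n}{\sqrt{m!\,n!}} A_{mn}$ vanishes identically, and extracting the angular Fourier coefficients (integrating against $e^{-i(n-m)\theta}$ over $\alpha = r e^{i\theta}$ and comparing powers of $r$) isolates each coefficient and yields $A_{mn} = 0$ for all $m,n$, i.e.\ $A = 0$.

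For the injectivity of $\gbarg$, suppose $\gbarg (f) = 0$. Testing against the basis vectors gives, for all $k,l \in \natz$,
\[
	0
	=
	\braket{x_k | \gbarg(f) x_l}
	=
	\frac{1}{\pi \sqrt{k!\,l!}}
	\int_\cmplx
	f(\alpha)\, \alpha^k \bar\alpha^l\, e^{-|\alpha|^2}\, d^2\alpha .
\]
Introducing the Gaussian probability measure $d\gamma := \pi^{-1} e^{-|\alpha|^2} d^2\alpha$, this says that $f$ is orthogonal in $L^2(\cmplx , \gamma)$ to every monomial $\alpha^k \bar\alpha^l$, hence to every polynomial in $\alpha$ and $\bar\alpha$. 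Since $f \in L^\infty(\cmplx) \subseteq L^2(\gamma)$ and such polynomials are dense in $L^2(\gamma)$ (the normalized Hermite-type polynomials form an orthonormal basis of the $L^2$ space of a Gaussian measure), it follows that $f = 0$ in $L^2(\gamma)$, and therefore $f = 0$ as an element of $L^\infty (\cmplx)$. Composing the two injective factors shows that $\Lambda$, and hence its restriction $\Lambda_\M$, is injective.

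I expect the main obstacle to lie in the two completeness/uniqueness inputs rather than in any single computation: on the $\Psi$ side, justifying cleanly that the angular Fourier extraction isolates each $A_{mn}$ (using that $(m+n, n-m)$ determines $(m,n)$, so the resulting power series in $r$ must vanish term by term), and on the $\gbarg$ side, invoking density of polynomials in $L^2$ of a Gaussian measure. Both are standard, so once they are cited or briefly verified the argument closes without further difficulty.
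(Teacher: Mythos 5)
Your proof is correct, but it reaches the key injectivity statements by a genuinely different route than the paper. The EB part and the reduction to injectivity of $\Lambda = \gbarg \circ \Psi$ coincide with the paper's setup, and both proofs split $\Lambda$ into the factors $\Psi$ and $\gbarg$. For $\Psi$, the paper simply cites injectivity as well known (Mehta--Sudarshan), whereas you re-derive it via the angular Fourier coefficients of $e^{|\alpha|^2}\Psi(A)(\alpha)$ and term-by-term vanishing of the resulting power series in $r$; this is a standard completeness argument and makes the overall proof more self-contained. For $\gbarg$, the two arguments genuinely diverge: the paper starts from $\braket{\psi_\alpha | \gbarg(f) \psi_\alpha} = 0$, i.e.\ $\int_\cmplx f(\beta) e^{-|\beta - \alpha|^2} d^2\beta = 0$ for all $\alpha$ (using $|\braket{\psi_\alpha | \psi_\beta}|^2 = e^{-|\beta - \alpha|^2}$), convolves with the modulated Gaussians $g(\alpha) = e^{-|\alpha|^2 + 2ik_1 \Real(\alpha) + 2ik_2 \Imag(\alpha)}$, and concludes that the Fourier transform of the $L^1$ function $f(\beta) e^{-|\beta|^2/2}$ vanishes, whence $f = 0$ by injectivity of the Fourier transform; so it uses only the \emph{diagonal} coherent-state matrix elements of $\gbarg(f)$. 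You instead test $\gbarg(f)$ against all matrix units $\ket{x_k}\bra{x_l}$ and reduce to the completeness of polynomials in $\alpha, \bar\alpha$ in $L^2$ of the Gaussian measure $\gamma$. Both are rigorous; your version is algebraically cleaner, but it outsources the analytic content to the Hermite-polynomial completeness theorem, which is itself classically proven by essentially the same Fourier/analytic-continuation device, so the two routes are of comparable depth. Two points you leave implicit and should state: $f = 0$ in $L^2(\gamma)$ yields $f = 0$ as an element of $L^\infty(\cmplx)$ because $\gamma$ and Lebesgue measure are mutually absolutely continuous, and the interchange of the angular integration with the double series in the $\Psi$ argument is justified by its locally uniform absolute convergence.
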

\begin{proof}
It suffices to establish the injectivity of $\Lambda .$
Since the injectivity of $\Psi$ is well-known (e.g.\ Ref.~\onlinecite{PhysRev.138.B274}),
we have only to prove the injectivity of $\gbarg .$
Take $f \in L^\infty (\cmplx)$ such that $\gbarg (f) = 0 .$
Then by noting $| \braket{\psi_\alpha | \psi_\beta} |^2 = e^{-|\beta-\alpha|^2} ,$
we have
\[
	\int_\cmplx f(\beta) e^{- | \beta - \alpha |^2} d^2 \beta 
	=
	\pi \braket{ \psi_\alpha| \gbarg (f) \psi_\alpha }
	=
	0
	\quad (\forall \alpha \in \cmplx) 
\]
and hence
\begin{equation}
	\int_\cmplx
	f(\beta)
	(h \ast g) (\beta) d^2 \beta
	=
	\int_\cmplx \int_\cmplx 
	f(\beta)
	e^{-| \beta - \alpha|^2 }
	g(\alpha)
	d^2 \beta d^2 \alpha 
	=0
	\label{eq:fg}
\end{equation}
for any $g \in L^1 (\cmplx) ,$
where $h(\alpha) := e^{- |\alpha |^2}$ and 
\[
	F \ast G (\beta)
	=
	\int_\cmplx F(\beta - \alpha) G(\alpha) d^2 \alpha
\]
is the convolution.
If we take
$g(\alpha) = e^{- | \alpha |^2 + 2ik_1 \Real(\alpha) + 2ik_2 \Imag (\alpha)}$
for each $(k_1, k_2) \in \realn^2 ,$ then a straightforward calculation gives
\[
	h \ast g(\beta)
	=\frac{\pi}{2}
	\exp \left[
	- \frac{| \beta|^2}{2}
	- \frac{1}{2} (k_1^2 + k_2^2)
	+ i k_1 \Real (\beta) + i k_2 \Imag (\beta)
	\right] .
\]
Hence \eqref{eq:fg} implies
\begin{equation}
	\int_\cmplx f(\beta) 
	e^{- \frac{| \beta|^2}{2} + i k_1 \Real (\beta) + i k_2 \Imag (\beta) } 
	d^2 \beta
	= 0
	\label{eq:fourier}
\end{equation}
for any $(k_1 , k_2 ) \in \realn^2 .$
Equation~\eqref{eq:fourier} implies that the Fourier transform 
of the integrable function $f(\beta) e^{- \frac{| \beta|^2}{2} }$ is zero.
Hence, by the injectivity of the Fourier transform, we obtain $f(\alpha) =0$
for almost all $\alpha \in \cmplx, $
proving the injectivity of $\gbarg . $
\end{proof}

By using the concept of the minimal sufficiency of channels,~\cite{kuramochi2017minimal}
we can show that $\Lambda_\M$ is not randomization-equivalent to 
any QC channel if $\M$ is not commutative.
A normal channel $\Gamma \in \nchset{\N}{\Min}$ is called minimal sufficient if
$\Gamma \circ \Phi = \Gamma$ implies $\Phi = \id_{\N}$
for any $\Phi \in \nchset{\N}{\N} .$
Any normal channel $\Gamma$ is randomization-equivalent to a minimal sufficient 
channel and such a minimal sufficient channels is unique up to normal isomorphism 
between outcome algebras.
It is immediate from the definition that any injective normal channel is minimal sufficient
and hence so is $\Lambda_\M .$ 
By following the construction of the minimal sufficient channel equivalent to a given channel (Ref.~\onlinecite{kuramochi2017minimal}, Theorem~1),
we can also show that the minimal sufficient channel equivalent to a QC channel
is also QC.
Therefore $\Lambda_\M$ cannot be randomization-equivalent to a QC channel
if $\M$ is not commutative.

Finally we remark that the injectivity of $\Lambda_\M$ still holds when
$\Psi$ is generalized to 
$\Psi (A) (\alpha)= \braket{\psi_{f(\alpha)} | A \psi_{f(\alpha)}} ,$
where $f : \cmplx \to \cmplx$ is a non-degenerate real affine map.
In specific, if $f (\alpha) = c \overline{\alpha}$ for some positive constant $c ,$
the EB channel $\gbarg \circ \Psi$ corresponds to the conjugate channel of the ideal quantum linear amplification channel
(Ref.~\onlinecite{kuramochi2018directed}, Eq.~(7)).

\section{Concluding remark} \label{sec:conclusion}
In this paper we have investigated the EB condition of channels 
in the infinite-dimensional general operator algebraic framework 
and found equivalent characterizations in Theorem~\ref{theo:main},
generalizing known results in finite dimensions.
Among the conditions, the infinite-self-compatibility condition involves 
the infinite \cstar-tensor product of operator algebras and 
is intrinsically infinite-dimensional.
We have also shown the Dedekind-closedness of normal EB channels
in the context of results of Ref.~\onlinecite{kuramochi2018directed}
and constructed injective normal EB channel with an arbitrary outcome 
von Neumann algebra.

One of the natural generalizations of the present work will be 
the EB condition for positive channels between general(ized) probabilistic theories (GPTs).~\cite{1751-8121-47-32-323001}
In an attempt to such generalization, 
one new problem will be the non-uniqueness of tensor products:~\cite{namioka1969}
in the GPT framework,
the tensor product corresponding to a composite system
is not unique even in finite-dimensions, 
which is not the case for operator algebraic theories.
This kind of problem is related to the definition of EB condition
and the self-compatibility conditions \eqref{it:njoint}-\eqref{it:ijoint}
in Theorem~\ref{theo:main},
while the condition~\eqref{it:rand1} can be straightforwardly generalized.

\begin{acknowledgements}
This work was supported by the National Natural Science Foundation of China 
(Grants No.~11374375 and No.~11574405).
The author would like to thank Erkka Haapasalo for 
helpful comments on the manuscript.
\end{acknowledgements}
%

\end{document}